\newtheorem{theorem}{Theorem}[section]
\newtheorem{lemma}{Lemma}[section]
\newtheorem{observation}{Observation}[section]
\newtheorem{remark}{Remark}[section]
\newcommand{\nat}{\mathbb{N}}
\newcommand{\natplus}{\mathbb{N}^+}
\newcommand{\ratplus}{\mathbb{R}^+}
\newcommand{\LambdaSet}{\mathbf{\Lambda}}
\newcommand{\leaf}[1]{\mathcal{L}(#1)}
\newcommand{\subtree}[1]{T_#1}
\newcommand{\cT}{\mathcal{T}}
\newcommand{\cG}{\mathcal{G}}
\newcommand{\cS}{\mathcal{S}}
\newcommand{\cC}{\mathcal{C}}
\newcommand{\RingOff}{\texttt{\textup{RingOffline}}}
\newcommand{\RingOn}{\texttt{\textup{RingOnline}}}
\newcommand{\ProcLabel}{\texttt{\textup{SetLabeling}}}
\newcommand{\ProcStrat}{\texttt{\textup{SetStrategy}}}
\newcommand{\ProcMain}{\texttt{\textup{CostExpl}}}
\begin{document}

\renewcommand{\thefootnote}{\fnsymbol{footnote}}

\footnotetext[2]{Faculty of Electronics, Telecommunications and Informatics, Gda{\'n}sk University of Technology, Gda{\'n}sk, Poland}

\title{Minimizing the Cost of Team Exploration\thanks{Research partially supported by National Science Centre (Poland) grant number 2015/17/B/ST6/01887.}}
\author{Dorota Osula\footnotemark[2]}

\maketitle
\thispagestyle{empty}

\floatname{algorithm}{Procedure}

\begin{abstract}
A group of \emph{mobile  agents} is given a task to \emph{explore} an edge-weighted graph $G$, i.e., every vertex of $G$ has to be visited by at least one agent. There is no centralized unit to coordinate their actions, but they can freely communicate with each other. The goal is to construct a deterministic \emph{strategy} which allows agents to complete their task optimally.
In this paper we are interested in a \emph{cost-optimal} strategy, where the cost is understood as the total distance traversed by agents coupled with the cost of invoking them. Two graph classes are analyzed, rings and trees, in the \emph{off-line} and \emph{on-line} setting, i.e., when a structure of a graph is known and not known to agents in advance. We present algorithms that compute the optimal solutions for a given ring and tree of order $n$, in $O(n)$ time units. For rings in the on-line setting, we give the $2$-competitive algorithm and prove the lower bound of $3/2$ for the competitive ratio for any on-line strategy. For every strategy for trees in the on-line setting, we prove the competitive ratio to be no less than $2$, which can be achieved by the $DFS$ algorithm. 
\end{abstract}

\vspace*{1cm}

{\bf keywords:} graph exploration, distributed searching, cost minimization, mobile agents, on-line searching.

\section{Introduction}

A group of \emph{mobile agents} is given a task to \emph{explore} the edge-weighted graph $G$, i.e., every vertex of $G$ has to be visited by at least one agent. Initially agents are placed on one vertex, called \emph{homebase}\footnote{After finishing the exploration, agents do not have to come back to the homebase.}, they are distinguishable (each entity has its unique \emph{id}) and they can communicate freely during the whole exploration process. The goal is to find a deterministic \emph{strategy} (\emph{protocol} or \emph{algorithm}), which is a sequence of \emph{steps}, where each step consists of parallel \emph{moves}. Each move is one of the two following types: (1) traversing an edge by an agent or (2) invoking a new agent in the homebase. The strategy should be optimal in specified sense; in the literature we discuss the following approaches: \emph{exploration time}, \emph{number of entities}, \emph{energy} and \emph{total distance} optimization. Exploration time is the number of \emph{time units} required to complete the exploration, with the assumption that a walk along an edge $e$ takes $w(e)$ time units (where $w(e)$ is the weight of the edge $e$). As one agent is sufficient to explore the whole graph, in the problem of minimizing the number of entities additional restrictions of the size of the \emph{battery} of searchers (i.e., the maximum distance each agent can travel) or the maximum exploration time are added. Energy is understood as the maximum value taken over all agents traversed distances. Lastly, the total distance is the sum of distances traversed by all agents. In this work we introduce a new approach: we are looking for the \emph{cost-optimal} strategy, where cost is the sum of the distances traversed by agents and a cost of invoking them. We consider the problem in the \emph{off-line} setting, where a graph is known in advance for searchers and the \emph{on-line} one, where agents have no \emph{a priori} knowledge about the graph.
We assume, for simplicity, that in one step only one agent can perform a move.\footnote{One may notice, that in order to reduce the number of time units of the algorithm, agents moves, when possible, should be perform simultaneously.} As the measure for an on-line algorithm the \emph{competitive ratio} is used (formally defined later), which is the maximum taken over all networks of the results of the on-line strategy divided by the optimal one in the off-line setting.

\noindent
\textbf{Related Work} For exploration in the off-line setting (referred often as \emph{searching}) many different models were extensively studied and numerous deep results have been obtained. Interestingly, there is a strong connection between graph exploration and many different graph parameters, e.g., pathwidth, tree\-width, vertex separation number; see e.g., \cite{FT08} for a survey and further references. In \cite{czyzowicz2017energy} edge-weighted trees in the off-line setting were studied, where a group of $k$ mobile agents has a goal to explore the  tree minimizing the total distance. Agents (as in the model presented in this paper) do not have to return to the homebase after the exploration. Thus, for $k$ big enough, it is a special case of our model, for which the invoking cost is equal to zero. For $k$ greater or equal to the number of leaves authors present the $O(n)$ time algorithm solving the problem. In the on-line setting, the most results were established in minimizing the time of the exploration. Algorithms and bounds of competitive ratio were investigated, mostly for trees \cite{FraigniaudCollectiveTree,DyniaWhyRobots,HigashikawaExploration} in different communication settings.  
As for the edge-exploration of general graphs (where apart from vertices also every edge has to be explored) see \cite{FraigniaudCollectiveTree,brass2014improved}.
The competitive ratio of exploration arbitrary graphs for teams of size bigger than $\sqrt{n}$ was studied in \cite{disser2016general,DereniowskiFastCollaborative}.
As for different graph classes, grids \cite{OrtolfMultiRobotExplo} and rings \cite{HigashikawaExploration} were investigated.
Several studies have been also undertaken minimizing the energy \cite{DyniaPowerAware,DyniaWhyRobots} and
the number of agents \cite{DasCollaborativeExploration} for trees. As one can observe trees are very important graph class and in this paper we give the algorithms in off-line and on-line setting and prove the upper bound for the competitive ratio. We note that our results may be of particular interest not only by providing theoretical insight into searching dynamics in agent computations, but may also find applications in the field of robotics. This model describes well the real life problems, where every traveled unit costs (e.g., used fuel or energy) and entities costs itself (e.g., equipping new machines or software license cost). It can be viewed also as a special case of traveling salesmen, vehicle routing or pickup and delivery problem. Many deep 
results were established in these fields, see e.g., \cite{kumar2012survey,golden2008vehicle,berbeglia2007static,vaishnav2017traveling,bellmore1968traveling} for the further references.

This work is constructed as follows: in the next Section we introduce the necessary notation and formally define the problem. The further two Sections present results for rings. In Section \ref{sec:RingsOff} the cost-optimal algorithm for the off-line setting is presented, whereas in Section \ref{sec:RingsOn} the $2$-competitive algorithm in the on-line setting is described. It is also proved, that for a positive invoking cost and any on-line strategy there exist a ring, which force the strategy to produce at least $3/2$ times higher cost than the optimal, off-line one.
Section \ref{sec:TreesOff} contains the algorithm and its analysis for trees in the off-line setting, while Section \ref{sec:TreesOn} provides the proof that no algorithm can perform better on trees in the on-line setting than $DFS$. In other words, the competitive ratio for every on-line algorithm is no less than $2$. We finish this work with the summary and a future outlook, and suggest areas of further research. 

\section{Notation}\label{sec:notation}

Let $\cG$ be a class of undirected, edge-weighted graphs. For every $G = (V(G),E(G), w) \in \cG$ we denote the sets of vertices and edges of $G$ as $V = V(G)$ and $E = E(G)$, respectively, $n = |V|$ and an edge-weight function $w : E \rightarrow \ratplus$. The sum of all weights of a subgraph $H$ of $G$ is denoted by $w(H) = \sum_{e \in E(H)} w(e)$. For any graph $G \in \cG$ we denote as $W(u,v)$ a walk 
that starts in $u \in V$ and finishes in $v \in V$ (if $u = v$ then either walk is a single vertex or a closed walk). A path is understood as an open walk with no repeated vertices. The \emph{distance} between two vertices is the sum of weights of all edges in a shortest path connecting them. For any graph $G$ and $v,u \in V(G)$ we denote the distance between $v$ and $u$ by $d_G(v, u)$ and omit the bottom index, when $G$ is clear from the context.

For every tree $T$ and the pair of vertices $v,u \in V(T)$ we denote a path between them as $P_T(v,u)$ and omit the bottom index, when a graph is clear from the context. 
 For any tree $T$ and vertex $v \in V(T)$ we define \emph{branch} as a subtree rooted in a child $c$ of $v$ enlarged by the vertex $v$ and edge $(v,c)$.

We define a strategy $\mathcal{S}$ as a sequence of moves of the following two types: (1) traversing an edge by an agent and (2) invoking a new agent in the homebase. We say that a strategy \emph{explores} a vertex, when it is reached for the first time. Let $k \in \natplus$ be the number of agents used by $\mathcal{S}$ (notice that $k$ is not fixed) and $d_i \in \ratplus \cup \{0\}$ the distance that $i$-th agent traversed, $i=1,2,\ldots,k$. The \emph{invoking} cost $q \in \ratplus \cup \{0\}$ is the cost connected to the agents: every time the strategy uses a new agent it has to `pay' for it $q$.
In other words, before exploring any vertex, the strategy needs to decide what is more profitable: invoke a new agent (and pay for it $q$) or use an agent already present in the graph. The number of agents, that can be invoked, is unbounded. The cost $c$ is understood as the sum of invoking costs and the total distance traversed by entities, i.e., $c = kq + \sum_{i=1}^kd_i$. The goal is to find a cost-optimal strategy, which explores every graph $G\in\cG$.

Let $\cS$ be an on-line strategy and $\cS^{opt}$ be the cost-optimal, off-line strategy for every graph in $\cG$. We denote as $\cS(G)$ and $\cS^{opt}(G)$ the cost of proceeding the strategy $\cS$ and $\cS^{opt}$, respectively, on $G \in \cG$.
As a measure for an on-line algorithm $\cS$ the \emph{competitive ratio} is used, which is the maximum taken over all networks of the results of the on-line strategy divided by the optimal one in the off-line setting, i.e.
\begin{equation*}
r(\cS) = \max_{G \in \cG}\frac{\cS(G)}{\cS^{opt}(G)}.
\end{equation*}

In the on-line setting it is assumed that an agent, which occupies the vertex $v$, knows the length of edges incident to $v$ and the \emph{status} of vertices adjacent to $v$, i.e., if they have been already explored. We assume that agents can freely communicate with each other.

\section{Rings in the Off-line Setting}\label{sec:RingsOff}

Let $\cC \subset \cG$ be a class of undirected, edge-weighted rings. For every  $C = (V, E, w) \in \cC$ of order $n$, we denote the vertices as $V = \{v_i,\ i\in \{0,\ldots,n - 1\}\}$ and edges as $E = \{e_i = (v_i,v_{i+1}),\ i\in \{0,\ldots,n-2\}\} \cup e_{n-1} = (v_{n-1},v_0)$. Without loss of generality, let  a homebase of $C$ be in $v_0$. We define the problem in the off-line setting as follows:

\begin{description}
\item[Off-line Ring Problem Statement]$\ $\\[2mm]
Given the ring $C$, the invoking cost $q$ and the homebase $h$, find a strategy of the minimum cost.
\end{description}

In the cost-optimal solution exactly one of the edges does not have to be traversed. Procedure $\RingOff$ finds in $O(n)$ steps, which edge is optimal to omit. If this edge is incident to the homebase, then only one agent is used, which simply traverses the whole ring without it (lines \ref{alg:first1}-\ref{alg:first2} and \ref{alg:last1}-\ref{alg:last2}). Otherwise, depending on the cost $q$, there might be one or two agents in use. 
Let $e$ be an omitted edge and let $C' = C \backslash e$, i.e., $C'$ is a tree rooted in $v_0$ with two leaves. We denote as $v^{min}$ and $v^{max}$ the closer and further, respectively, leaf in $C'$. If the invoking cost $q$ is lower than $d_{C'}(v_0,v^{min})$, then it is more efficient to invoke two agents, which traverse two paths $P_{C'}(v_0,v^{min})$ and $P_{C'}(v_0,v^{max})$ (lines \ref{alg:middle21}-\ref{alg:middle22}). On the other hand, if $q \geq d_{C'}(v_0,v^{min})$, then only one agent is used, which traverses the path $P_{C'}(v_0,v^{min})$ twice (lines \ref{alg:middle11}-\ref{alg:middle12}). 

We give a formal statement of the procedure $\RingOff$ and make an observation about its cost-optimality.

\begin{algorithm}
   \caption{$\RingOff$}
   \begin{algorithmic}[1]
   	\Require Ring $C$, homebase $v_0$, invoking cost $q$
    \Ensure Strategy $\cS$    
    \State $C_i \leftarrow C \backslash e_i,\ i \in \{0,\ldots,n-1\}$
    \State $v^{min}_i \leftarrow$ a vertex $v \in \{v_i, v_{i+1}\}$ for which $d_{C_i}(v_0,v)$ is minimum, $i\in\{1,\ldots,n-2\}$
    \State $v^{max}_i \leftarrow$ a vertex $v \in \{v_i, v_{i+1}\}$ for which $d_{C_i}(v_0,v)$ is maximum, $i\in\{1,\ldots,n-2\}$
    \State $c_i \leftarrow q + w(C_i),\ i=0,n-1$
    \State $c_i \leftarrow  \min\{2q + w(C_i), q + d_{C_i}(v_0,v^{min}_i) + w(C_i)\},\ i\in \{1,\ldots,n-2\}$
        \State Let $i_{min}$ be the index of the minimum element of $\{c_i,\ i\in\{0,\ldots,n-1\}\}$
        \State Add a move to $\cS$: invoke an agent $a_1$ in $v_0$
        \If{$i_{min} ==0$} \label{alg:first1}   		
       		\State Add a sequence of moves to $\cS$: traverse by $a_1$ path $P_{C_{i_{min}}}(v_0,v_1)$ \label{alg:first2}
        \ElsIf {$i_{min}>0$ and $i_{min}<n-1$}
        	\If{$2q + w(C_i) < q + d_{C_i}(v_0,v^{min}_{i_{min}}) + w(C_i)$}\label{alg:middle21}
            	\State Add a sequence of moves to $\cS$: traverse by $a_1$ path $P_{C_{i_{min}}}(v_0,v^{min}_{i_{min}})$
            	\State Add a move to $\cS$: invoke an agent $a_2$ in $v_0$
            	\State Add a sequence of moves to $\cS$: traverse by $a_2$ path $P_{C_{i_{min}}}(v_0,v^{max}_{i_{min}})$ \label{alg:middle22}          
            \Else \label{alg:middle11}
        		\State Add a sequence of moves to $\cS$: traverse by $a_1$ path $P_{C_{i_{min}}}(v_0,v^{min}_{i_{min}})$ 
                \State Add a sequence of moves to $\cS$: traverse by $a_1$ path $P_{C_{i_{min}}}(v^{min}_{i_{min}},v^{max}_{i_{min}})$ \label{alg:middle12}
           \EndIf
        \Else \label{alg:last1}
        	\State Add a sequence of moves to $\cS$: traverse by $a_1$ path $P_{C_{i_{min}}}(v_0,v_{n-1})$  \label{alg:last2}
        \EndIf
        \State return $\cS$
  \end{algorithmic}
\end{algorithm}

\begin{observation}
For any invoking cost $q$ and ring $C$, the strategy $\cS$, returned by the procedure $\RingOff$ is cost-optimal.
\end{observation}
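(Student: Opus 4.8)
The procedure always outputs a strategy that traverses exactly the edges of some $C_{i_{min}}=C\setminus e_{i_{min}}$ at cost $c_{i_{min}}=\min_i c_i$; so the plan is to show that (a) the set of edges traversed by \emph{any} exploring strategy is either all of $C$ or exactly $E(C_i)$ for some index $i$, with the former case always dominated, and (b) for each $i$, $c_i$ is a tight lower bound on the cost of any strategy whose traversed edge set is $E(C_i)$. Together these yield $\min_i c_i$ as a lower bound on the cost of every exploring strategy, and the procedure attains it.

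For (a) I would note that the set $H$ of edges traversed at least once by a strategy is connected, because every agent's walk begins at the homebase $v_0$ and hence $H$ is a union of connected subgraphs all containing $v_0$; and $H$ spans $V$, because every vertex is explored. Since removing two or more edges disconnects a ring, either $H=C$ or $H=E(C_i)$ for a unique $i$. If $H=C$, the total traversed distance is at least $\wt{C}$ and the cost at least $q+\wt{C}\ge q+\wt{C_0}=c_0\ge\min_i c_i$, so this case never beats the procedure and it suffices to analyse strategies omitting exactly one edge.

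For (b), fix $i$ and a strategy with traversed edge set $E(C_i)$ using $k$ agents. Every edge of $C_i$ is crossed, so the total distance is at least $\wt{C_i}$ and the cost at least $kq+\wt{C_i}$; in particular any $k\ge 2$ strategy costs at least $2q+\wt{C_i}$, which is met by the two-agent scheme of lines \ref{alg:middle21}--\ref{alg:middle22}, one agent per branch at $v_0$. For $k=1$ I would invoke the standard route-inspection estimate on a tree: a walk of one agent starting at $v_0$ and covering all edges of the tree $C_i$ crosses every edge not on its final $v_0$-to-endpoint path an even number $\ge 2$ of times and every edge on that path an odd number $\ge 1$ of times, so its length is at least $2\wt{C_i}-\max_{v\in V(C_i)}d_{C_i}(v_0,v)$. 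If $e_i$ is incident to $v_0$ this maximum is $\wt{C_i}$, giving cost $\ge q+\wt{C_i}=c_i$, attained by the single agent of lines \ref{alg:first1}--\ref{alg:first2} (or \ref{alg:last1}--\ref{alg:last2} when $i=n-1$); otherwise it equals $d_{C_i}(v_0,v^{max}_i)=\wt{C_i}-d_{C_i}(v_0,v^{min}_i)$, giving cost $\ge q+d_{C_i}(v_0,v^{min}_i)+\wt{C_i}$, attained by the agent of lines \ref{alg:middle11}--\ref{alg:middle12}. Minimising over $k$ shows the cheapest strategy with edge set $E(C_i)$ costs exactly $c_i$.

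Combining (a) and (b): every exploring strategy costs at least $\min_i c_i=c_{i_{min}}$, and $\RingOff$ produces one of cost exactly $c_{i_{min}}$, hence it is cost-optimal. The one step needing real care is the $k=1$ lower bound on the tree $C_i$ (the parity/route-inspection count), since there the crude ``sum of all edge weights'' bound is not tight; the rest is structural bookkeeping about connectivity of the traversed subgraph and a short case split on whether the omitted edge touches the homebase.
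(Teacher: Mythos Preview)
Your proposal is correct and follows essentially the same approach as the paper: reduce to choosing which single edge to omit, compute the optimal cost for each choice, and take the minimum. Your version is in fact more rigorous than the paper's own proof, which simply asserts that ``exactly one edge does not have to be traversed'' and that the stated one- and two-agent costs are optimal for each $C_i$; you supply the missing justifications (connectivity/spanning of the traversed edge set, the dominated all-edges case, and the parity/route-inspection lower bound $2\wt{C_i}-\max_v d_{C_i}(v_0,v)$ for a single agent on a tree).
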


\begin{proof}
Let $C$ be any ring, $q$ any invoking cost and $\cS$ the strategy returned by the procedure $\RingOff$ for $C$ and $q$. Because every vertex has to be explored, exactly one edge does not have to be traversed. Our procedure computes for every edge $e \in E(C)$, the optimal cost $c$ of exploring $C' = C \backslash e$. If $e$ is incident to the homebase, then $C'$ is a path with a homebase in one of its ends. Thus, the cost-optimal strategy uses one agent and $c = q + w(C')$. If $e$ is not incident to the homebase, then $C'$ is a path with a homebase in one of its internal vertices. Thus, the cost-optimal strategy uses one or two agents. In the first case, an agent has to traverse to the closer end of $C'$ and then along the whole path. In the second case, each of the agents traverses from the homebase to one of the end vertices. Thus, $c = \min\{q + d' + w(C'), 2q + w(C')\}$, where $d'$ is the distance between the homebase and the closer end vertex in $C'$.
At the end $\RingOff$ chooses an edge, which deleting leads to the lowest cost and sets the corresponding strategy.
\end{proof}

\section{Rings in the On-line Setting}\label{sec:RingsOn}

In this section we present the procedure $\RingOn$, which produces in $O(n)$ steps an $2$-competitive strategy, which explores any unknown ring $C$. We also prove the lower bound of $3/2$ for the competitive ratio for any $q>0$. We define the problem in the on-line setting as follows:

\begin{description}
\item[On-line Ring Problem Statement]$\ $\\[2mm]
Given the invoking cost $q$ and the homebase $h$, find a strategy of the minimum cost for any ring $C$. 
\end{description}

We start by giving the informal description of the procedure $\RingOn$. Let $\cS$ be the on-line strategy returned by the procedure $\RingOn$ for a given homebase $v_0$ and invoking cost $q$. Firstly $\cS$ invokes an agent $a_1$ in $v_0$ and denotes as $e_1$ and $e_{-1}$ edges incident to $v_0$, with the lower and higher weight respectively (lines \ref{alg:invoke}-\ref{alg:firstEdges}). Searcher $a_1$ traverses first $e_1$ and then continues the exploration process as long as it is profitable, i.e., the cost of traversing the next edge is less or equal to the invoking cost plus $w(e_{-1})$ (lines \ref{alg:ring1a}-\ref{alg:ring1b}). If at some point a new agent is invoked, then it traverses the edge $e_{-1}$ (lines \ref{alg:ring2a}-\ref{alg:ring2b}). We notice here that the lines \ref{alg:ring2a}-\ref{alg:ring2b} are executed at most once, as these are initial steps for the second agent. Later, the greedy approach is performed: an edge with lesser weight is traversed either by $a_1$ (lines \ref{alg:ring1a}-\ref{alg:ring1b}) or by $a_2$ (lines \ref{alg:ring3a}-\ref{alg:ring3b}). Below we give a formal statement of the procedure $\RingOn$.

\begin{algorithm}
   \caption{$\RingOn$}
   \begin{algorithmic}[1]
   	\Require Homebase $v_0$, invoking cost $q$
    \Ensure Strategy $\cS$      	
        \State $i_r \leftarrow 1$
        \State $i_l \leftarrow -1$
        \State $s \leftarrow 1$
        \State Add a move to $\cS$: invoke an agent $a_1$ in $v_0$  \label{alg:invoke}
        \State Denote as $e_1$ and $e_{-1}$ edges adjacent to $v_0$, with the lower and higher weight respectively \label{alg:firstEdges}
        \While{Graph is not explored}
        	\While{$(w(e_{i_l}) + q\cdot s )\geq w(e_{i_r})$ and graph is not explored}\label{alg:ring1a} 
            	\State Add a move to $\cS$: traverse $e_{i_r}$ by $a_1$
                \State Denote the unexplored edge incident to the vertex occupied by $a_1$ as $e_{i_r + 1}$
                \State $i_r \leftarrow i_r + 1$
            \EndWhile\label{alg:ring1b}
            \If{$s==1$ and $(w(e_{-1}) + q) < w(e_{i_r})$} \label{alg:ring2a}
            		\State Add a move to $\cS$: invoke an agent $a_2$ in $v_0$ 
                	\State Add a move to $\cS$: traverse $e_{-1}$ by $a_2$ 
                    \State Denote the unexplored edge incident to the vertex occupied by $a_2$ as $e_{-2}$
                \State $i_l \leftarrow -2$
                \State $s \leftarrow 0$
                \EndIf\label{alg:ring2b}
            \If{$s==0$} \label{alg:ring3a}
            \While{$w(e_{i_l}) < w(e_{i_r})$ and graph is not explored} 
            	\State Add move to $\cS$: traverse $e_{i_l}$ by $a_2$ 
                \State Denote the unexplored edge incident to the vertex occupied by $a_2$ as $e_{i_l - 1}$
                \State $i_l \leftarrow i_l - 1$
            \EndWhile
            
            \EndIf \label{alg:ring3b}
        \EndWhile
        \State return $\cS$
  \end{algorithmic}
\end{algorithm}

The next lemma says that for any invoking cost and any ring procedure $\RingOn$ returns the solution at most twice worse than the optimum, which is tight. 

\begin{lemma}
The strategy returned by $\RingOn$ is $2$-competitive.
\end{lemma}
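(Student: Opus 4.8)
The plan is to bound $\cS(C)/\cS^{opt}(C)$ by $2$ for every ring $C$ by comparing both quantities to $q + w(C) - w(e_{max})$, where $e_{max}$ denotes a heaviest edge of $C$. For the optimum I would first extract a lower bound from the Observation of Section~\ref{sec:RingsOff}: since $\cS^{opt}(C) = \min_i c_i$ and each $c_i$ equals $q + w(C) - w(e_i)$ plus a non-negative term (namely $q$, or the detour $d_{C_i}(v_0,v^{min}_i)$, or $0$), we get $c_i \ge q + w(C) - w(e_i)$ for all $i$, hence $\cS^{opt}(C) \ge q + w(C) - w(e_{max})$.

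Next I would describe the behaviour of $\RingOn$ on $C$. Each agent always moves along an edge that has not yet been traversed, and such an edge necessarily leads to a currently unvisited vertex (otherwise all vertices would already be visited and the strategy would have halted); consequently the strategy performs exactly $n-1$ traversals, one new vertex per move, and leaves exactly one edge $e^\circ$ untraversed. Thus $\cS(C) = kq + w(C) - w(e^\circ)$, where $k \in \{1,2\}$ is the number of agents invoked. If $k=1$ then $e^\circ = e_{-1}$, and since $a_2$ was never invoked the condition on line~\ref{alg:ring1a} held for every edge $a_1$ traversed, i.e.\ $w(e) \le w(e_{-1}) + q$ for each such $e$; together with $w(e_{-1}) \le w(e_{-1}) + q$ this gives $w(e_{max}) \le w(e_{-1}) + q$, so $\cS(C) = q + w(C) - w(e_{-1}) \le 2q + w(C) - w(e_{max})$.

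The case $k=2$ is the technical heart. Write the edges of $C$ in cyclic order as $g_0,\dots,g_{n-1}$ with $g_0 = e_1$ and $g_{n-1} = e_{-1}$; then $a_1$ traverses the prefix $g_0,\dots,g_{p-1}$ in its first phase, $a_2$ traverses $g_{n-1}$, and afterwards the two agents cover the remaining segment $g_p,\dots,g_{n-2}$ by a greedy two-pointer process that at each move advances the agent whose current frontier edge is the lighter of the two (one should check that the interleaving of the two inner loops on lines~\ref{alg:ring1a} and~\ref{alg:ring3a}, with its asymmetric tie-breaking, indeed realises this rule). I would then prove, by induction on the number of edges in the segment, that such a process leaves untraversed a maximum-weight edge of the segment: the inductive step removes one end edge, which is no heavier than the other end, so the maximum of the remaining segment is unchanged (the base case is a single edge, with no vertex to explore). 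Hence $w(e^\circ) = \max\{w(g_p),\dots,w(g_{n-2})\}$. On the other hand $a_2$ was invoked precisely because $a_1$'s frontier edge $g_p$ satisfied $w(g_p) > w(e_{-1}) + q$, while every edge traversed by $a_1$ in its first phase, and $e_{-1}$ itself, has weight at most $w(e_{-1}) + q < w(g_p) \le w(e^\circ)$. Combining, $w(e^\circ) \ge w(e)$ for every edge $e$ of $C$, i.e.\ $w(e^\circ) = w(e_{max})$, so again $\cS(C) = 2q + w(C) - w(e_{max})$.

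In both cases $\cS(C) \le 2q + w(C) - w(e_{max}) = q + \big(q + w(C) - w(e_{max})\big)$; since $w(C) \ge w(e_{max})$ we have $q \le q + w(C) - w(e_{max}) \le \cS^{opt}(C)$, and therefore $\cS(C) \le 2\,\cS^{opt}(C)$. As $C$ is arbitrary, the strategy returned by $\RingOn$ is $2$-competitive. I expect the main obstacle to be exactly the $k=2$ analysis: checking that the pseudocode's two interleaved while-loops really implement the "advance-the-lighter-frontier" rule, and making the induction on the greedy segment fully rigorous in the degenerate cases. Finally, to see the constant $2$ cannot be improved, I would present a three-vertex ring in which both edges incident to the homebase have a small weight $a$ and the opposite edge has weight $M > a + q$: there $\RingOn$ invokes two agents at cost $2q + 2a$, the optimum uses a single agent at cost $q + 3a$, and the ratio tends to $2$ as $a \to 0$.
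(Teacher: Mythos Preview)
Your proposal is correct and lands on the same key inequalities as the paper: $\cS(C)\le 2q+w(C\setminus e_{max})$ and $\cS^{opt}(C)\ge q+w(C\setminus e_{max})$, from which the ratio~$\le 2$ follows. The route, however, is organised differently. The paper splits according to what the \emph{optimal} strategy does (two agents / omits an incident edge versus one agent omitting an internal edge) and in the first branch simply asserts that $\RingOn$ is optimal; you instead split according to how many agents $\RingOn$ invokes and supply the missing mechanics---in particular, your induction showing that the greedy two-pointer phase always leaves a maximum-weight edge of the remaining segment untraversed, combined with $w(g_p)>w(e_{-1})+q$, is exactly what justifies the paper's claim that ``$\cS$ omits $e_{max}$'' in the two-agent case. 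Your tightness instance (small incident edges, a heavy opposite edge forcing $k=2$) is different from the paper's (which forces $k=1$ via $w(e_1)=q$), but both give ratio $\to 2$; note yours needs $q>0$, which is harmless since for $q=0$ the procedure is optimal anyway.
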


\begin{proof}
Let $C \in \cC$ be any ring for which the cost-optimal, off-line strategy $\cS^{opt}$ uses two agents or omits an edge incident to the homebase. We notice that procedure $\RingOn$ computes the cost-optimal strategy for $C$. However, the situation is different otherwise.

Let now $C \in \cC$ be any ring with the homebase in $v_0$ for which the cost-optimal strategy uses one agent and omits the edge not incident to the homebase. Let $q$ be any invoking cost and $\cS$ be a strategy returned by the procedure $\RingOn$. Denote as $e_{max}$ the edge of $C$ of the maximum weight and as $e'$ the edge incident to $v_0$ of the bigger weight. Let $e$ be an omitted edge in the cost-optimal off-line strategy and $v^{min}$ be the closer leaf in the tree $C\backslash e$ rooted in $v_0$. The cost-optimal strategy $\cS^{opt}$ uses one agent, that traverses firstly to the $v^{min}$, then returns to the homebase and explores the rest of the ring apart from the edge $e$. Thus, its cost can be lower bounded by:
\begin{equation}
\cS^{opt}(C) = q + w(C \backslash e) + d_{C\backslash e}(v_0,v^{min}) > q + w(C \backslash e).
\end{equation}

If $q + w(e') < w(e_{max})$, then
$\cS$ uses two searchers and omits $e_{max}$, i.e.,
\begin{align}
\cS(C) = 2q + w(C \backslash e_{max}).
\end{align}

On the other hand, if $q + w(e') \geq w(e_{max})$, then
$\cS$ invokes one searcher, which traverses the whole ring apart from $e'$, i.e,
\begin{align}
\cS(C) = q + w(C \backslash e') = q + w(C\backslash e_{max}) + w(e_{max}) - w(e') \leq 2q + w(C \backslash e_{max}). 
\end{align}

This leads to the following upper bound for the competitive ratio

\begin{align}
r(\cS) &\leq \frac{2q + w(C \backslash e_{max})}{q + w(C \backslash e)} \leq \frac{2q + w(C \backslash e)}{q + w(C \backslash e)} \leq 2.
\end{align}

The bound of $2$ can be reached for the three vertices ring $C' \in \cC$, where $w(e_0) = w(e_2) = 1$ and $w(e_1) = q$ for $q$ large enough. Indeed, we notice that although both strategies, $\cS$ and $\cS^{opt}$, invoke only one searcher, in the cost-optimal solution it traverses the edge $e_0$ twice instead of traversing the edge $e_1$, i.e., $\cS^{opt}(C') = q + 3$ and $\cS(C') = 2q + 1$, which finishes the proof.

\end{proof}
   
The theorem below shows that for any positive invoking cost and any on-line strategy there exist  a ring for which the strategy achieves at least $3/2$ times higher cost than the optimal one.

\begin{theorem}
For any invoking cost $q>0$, every on-line strategy $\cS$ is at least $\frac{3}{2}$-competitive.
\end{theorem}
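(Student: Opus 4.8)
The plan is an adversary argument. Fix $q>0$ and an arbitrary on-line strategy $\cS$, and build — adaptively, while $\cS$ runs — a ring $C$ on which $\cS(C)\ge\bigl(\tfrac32-o(1)\bigr)\cS^{opt}(C)$. The ring always has the same shape: the homebase $v_0$ with its two incident edges of equal tiny weight, then two \emph{arcs} (paths of tiny equal-weight edges) leaving $v_0$, whose far ends are joined by a single very heavy edge $e^\star$ of weight $M$, with $M$ so large that no reasonable strategy ever traverses it. The adversary reveals an edge weight only when an agent reaches a vertex, and it keeps the freedom to decide how long each arc eventually is, the only constraint being consistency: an arc's final length must be at least the deepest any agent has penetrated it. For such a ring the off-line optimum deletes $e^\star$, leaving a path with $v_0$ at distances $s\le\ell$ from its two ends, which one agent explores by walking to the near end, back, and then to the far end, at cost $q+2s+\ell$ (and this beats the two-agent cost $2q+s+\ell$ since the adversary will always keep $s\le q$). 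So the proof reduces to forcing $\cS(C)\gtrsim 3q$ while the adversary keeps $\ell\approx q$ and $s\approx 0$, so that $\cS^{opt}(C)\approx 2q$.

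I would then split on how $\cS$ behaves. If $\cS$ ever invokes a second agent, it has already committed $2q$ to invoking costs; the adversary reacts by freezing one arc at essentially the depth it has been probed so far — typically one arc is still essentially untouched and becomes a stub of length $\approx 0$ — while letting the other grow until some agent has walked distance $q$ into it, then capping it there. Since $\cS$ must still traverse both arcs, $\cS(C)\ge 2q+\ell+s\approx 3q$ against $\cS^{opt}(C)\approx q+2s+\ell\approx 2q$; the saving the off-line solution gets from using a single agent is precisely what forces the ratio to $\tfrac32$. If instead $\cS$ uses one agent throughout, then to explore $C$ at all some agent must eventually walk distance $q$ along one arc; the adversary waits for the first such moment, declares that arc the ``long'' one (length $q$) and caps the other just past whatever shallow probing it has received. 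Because all edges seen so far are identical, every move the agent made before committing to the long arc was uninformative, so any back-and-forth is pure waste, and a short calculation gives total cost at least $q+q+q=3q$ (invoke, out the long arc, back to $v_0$, then out the short arc), versus $\cS^{opt}(C)\approx 2q$. Letting the stub length and the tiny edge weights tend to $0$ and $M\to\infty$ then yields $r(\cS)\ge\tfrac32$.

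The step I expect to be the real obstacle is making the adversary's adaptive decisions both consistent and strong enough against every strategy simultaneously. A naive rule (``the first arc probed is the long one'') is defeated by a strategy that uses two agents to penetrate both arcs symmetrically and synchronously: then neither arc can be made a stub, the ring looks balanced, $\cS^{opt}(C)\approx 4q$ as well, and the ratio collapses toward $1$. The fix is for the adversary to commit one arc to ``short'' as early as possible — before any agent has gone deep into it — while committing the arc an agent does commit to deeply to ``long''; one has to check that these two requirements can always be satisfied at once, whatever the interleaving of moves, and that in every resulting regime (deep single-agent commitment, shallow hedging followed by commitment, early two-agent split, late two-agent split, and cow-path-style oscillation) the wasted travel plus the unavoidable invoking costs sum to at least half of the off-line cost. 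Carrying out that accounting uniformly over all strategies, and keeping it independent of the still-hidden portion of the ring, is where the bulk of the work lies; the underlying geometric facts are elementary.
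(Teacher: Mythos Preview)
Your plan is essentially the paper's proof: reveal $\epsilon$-weight edges adaptively and split on whether a second agent is invoked before the single agent has walked distance $q$ from the homebase; in the latter case the paper, like you, places a heavy edge at that point (of weight $2q$ rather than an arbitrary $M$) and the two subcases you list are exactly the paper's Subcases B1 and B2. The one substantive difference is the former case: the paper closes the ring with one more $\epsilon$-edge rather than a heavy one, so the optimum becomes a single agent going once around at cost $\approx q+\epsilon(h_1+h_2)$, and the instance is completely frozen at the moment the second agent is invoked --- this sidesteps exactly the ``what does $\cS$ do afterwards'' accounting you flag as the main obstacle. Your symmetric-probing worry is in fact harmless under either construction: if both agents are invoked with nothing yet explored, stubbing one side at length $\epsilon$ and capping the other at $q$ already forces cost at least $3q$ against an optimum of roughly $2q$.
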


\begin{proof}
Let $\epsilon$ be a small positive number, such that $\epsilon \ll q$ and $q\mod \epsilon=0$. Let $\cC_1,\ \cC_2 \subset \cC$ be two classes of rings constructed as follows. For every $i \in \{3,4,\ldots\}$ we add to the class $\cC_1$ a ring $C_1^i$ 
of order $i$ with a homebase in $v_0$ and set the weights of all edges as $\epsilon$. For every positive integer $i$ and $j \in \{i+2,i+3,\ldots\}$ we add to the class $\cC_2$ a ring $C_2^{(i,j)}$ 
of order $j$ with homebase in $v_0$. We set the weights of all edges, apart from $(v_i,v_{i+1})$ with the weight $2q$, as $\epsilon$. Let $\cC' = \cC_1 \cup \cC_2$.

We are going to show that for any on-line strategy $\cS$, there exist a ring $C \in \cC'$ and a strategy $\cS'$, such that $\cS(C) \geq \frac{3}{2}\cS'(C)$.

Let $G_i, i \in \nat$, be an explored part of the graph in the $i$-th step of the algorithm, starting from $G_0 = v_0$.
Firstly, any strategy $\cS$ invokes in $v_0$ one agent $a_1$ and explores some part of a ring. Let $j_1 \in \nat$ be a step in which the second agent $a_2$ is invoked by $\cS$ (we set $j_1$ as infinity, if $\cS$ uses only one searcher). For every $i < j_1$, if during the $i$-th step $a_1$ explores a new edge $(v,v')$ of the weight $\epsilon$, a new vertex $u$ and edge $(v',u)$ are added. The weight of $(v',u)$ is set as $\epsilon$, if $w(P_{G_i}(v_0,v')) < q$, and $2q$, if $w(P_{G_i}(v_0,v')) = q$. We consider two cases. In the first one, when the new agent is invoked only edges of the length $\epsilon$ are visible. In the second case, $a_1$ reaches a vertex incident to the edge of the weight $2q$ before the second agent is invoked or $a_1$ explores the graph on its own.

\paragraph*{Case A: \textmd{\textit{in the step $j_1$ only edges of the length $\epsilon$ are visible.}}} We can treat $G_{j_1}$ as a tree rooted in  $v_0$, where $v_0$ has two branches. Denote the number of vertices in them as $1 \leq h_1 < q$ and $0 \leq h_2 \leq h_1$ (where by $h_2 = 0$, we understand that $G_{j_1}$ is a path of the length $\epsilon \cdot h_1$ starting in $v_0$). We omit the case when $a_2$ is invoked in the second step (i.e., $h_1 = h_2 = 0$), as then competitive ratio of at least $2$ can be easily obtained (e.g., for a triangle with edges of the weight $\epsilon$). 
We choose a ring $C = C_1^{h_1 + h_2 + 2}$ from the class $\cC_1$. See Figure \ref{fig:onlineC1} for the illustration. 

We notice that $|V(C)| = h_1 + h_2 + 2$ and $|V(G_{j_1})| = h_1 + h_2 + 1$. 
In order to explore $G_{j_1}$, $a_1$ traversed at least twice the path $P_{G_{j_1}}(v_0,v_{h_1 + 2})$ (possible empty if $h_2=0$) and once $P_{G_{j_1}}(v_0,v_{h_1})$. Then, in the $j_1$-th step the second agent is invoked, which generates the extra cost $q$. In order to explore the whole $C$, at least one extra move of cost $\epsilon$ has to be done (e.g., $a_1$ can traverse from $v_{h_1}$ to $v_{h_1 + 1}$). 
Thus, the total cost of exploring $C$ by $\cS$ can be lower bounded by
\begin{align}
\cS(C) \geq 2q + \epsilon(2h_2 + h_1 + 1).
\end{align}

Let $\cS'$ be a strategy, which explores $C$  by visiting all vertices exactly once with one agent. This leads to the following upper bound of the cost-optimal solution
\begin{equation}
\cS^{opt}(C) \leq \cS'(C) =  q + \epsilon(h_1 + h_2 + 1).
\end{equation}

As $\epsilon h_1 < q$ and $h_2 \geq 0$, we obtain the following lower bound of the competitive ratio
\begin{align}
r(\cS)& = \lim\limits_{\epsilon \rightarrow 0} \frac{\cS(C)}{\cS^{opt}(C)} \geq 
\lim\limits_{\epsilon \rightarrow 0} \frac{2q + \epsilon(h_1 + 2h_2 + 1)}{q + \epsilon(h_1 + h_2 + 1)} \geq \lim\limits_{\epsilon \rightarrow 0} \frac{2q + q + 2\cdot 0 + \epsilon}{q + q + 0 + \epsilon} = \\
& = \lim\limits_{\epsilon \rightarrow 0}\frac{3q + \epsilon}{2q + \epsilon} = \frac{3}{2}.
\end{align}

\paragraph*{Case B: \textmd{\textit{$a_1$ reaches a vertex incident to the edge of the weight $2q$ before the second agent is invoked or $a_1$ explores the graph on its own.}}}  Let $j_2$ be the step in which $a_1$ explores the vertex incident to the edge of the weight $2q$. We can treat $G_{j_2}$ as a tree rooted in  $v_0$, where $v_0$ has two branches. Denote the number of vertices in them as $h_1 = q/\epsilon$ and $0 \leq h_2 < h_1$. We choose a ring $C = C_2^{(h_1 + h_2 + 2, h_1)}$ from the class $\cC_2$. See Figure \ref{fig:onlineC2} for the illustration. Let $\cS'$ be a strategy, which uses one agent, which firstly explores vertices $v_0, v_{h_1+h_2+1}, v_{h_1 + h_2},\ldots,v_{h_1+1}$, then returns to $v_0$ and explores the rest of $C$ omitting the edge of the weight $2q$. This leads to the following upper bound of the cost-optimal solution 
\begin{equation}
\cS^{opt}(C) \leq \cS'(C) =  q + 2\epsilon(h_2 + 1) + \epsilon h_1 = 2q + 2\epsilon(h_2 + 1).
\end{equation}

In the $j_2$-th step only one vertex of $C$ is not explored by $\cS$: $v_{h_1+1}$, which is incident to the edges of the weights $\epsilon$ and $2q$, and agent $a_1$ occupies the vertex $v_{h_1}$. The remaining vertex can be either explored by $a_1$ or by a newly invoked searcher $a_2$.

\paragraph*{Subcase B1: \textmd{\textit{$a_1$ explores the vertex $v_{h_1+1}$.}}} In the cheapest solution searcher $a_1$ returns to $v_0$ and traverses the path $v_0, v_{h_1+h_2+1}, v_{h_1 + h_2},\ldots,v_{h_1+1}$ of the length $\epsilon(h_2 + 1)$. Thus, the total cost of exploring $C$ by $\cS$ can be lower bounded by
\begin{align}
\cS(C) \geq q + \epsilon(2h_2 + 2h_1 + h_2 + 1)= 3q + \epsilon(3h_2 + 1),
\end{align}

which leads to the following competitive ratio
\begin{equation}
r(\cS) = \lim\limits_{\epsilon \rightarrow 0} \frac{\cS(C)}{\cS^{opt}(C)} \geq \lim\limits_{\epsilon \rightarrow 0} \frac{3q + 3\epsilon h_2 + \epsilon}{2q + 2\epsilon h_2 + 2\epsilon} = \frac{3}{2}.
\end{equation}

\paragraph*{Subcase B2: \textmd{\textit{a newly invoked agent $a_2$ explores the vertex $v_{h_1+1}$.}}} The total cost of exploring $C$ by $\cS$ can be lower bounded by
\begin{align}
\cS(C) \geq (q + 2\epsilon h_2 + \epsilon h_1) + (q + \epsilon h_2 + 1) = 3q + 3\epsilon h_2 + \epsilon,
\end{align}

giving the same bound of $\frac{3}{2}$ and finishing the proof.


\begin{figure}[htb]
  \centering
  \begin{subfigure}{.45\textwidth}
  \centering
  \includegraphics[scale=1.1]{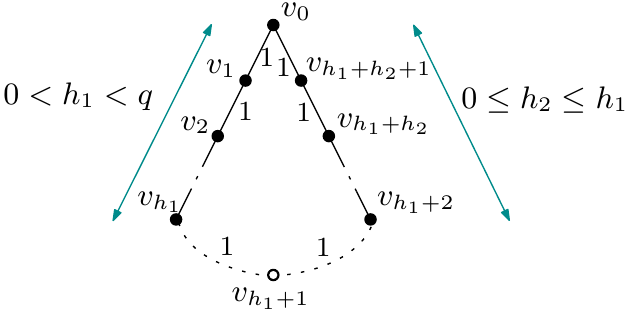}
  \caption{Ring $C_1^{h_1 + h_2 + 2}\in \cC_1$.}
  \label{fig:onlineC1}
\end{subfigure} \quad
\begin{subfigure}{.45\textwidth}
  \centering
  \includegraphics[scale=1.1]{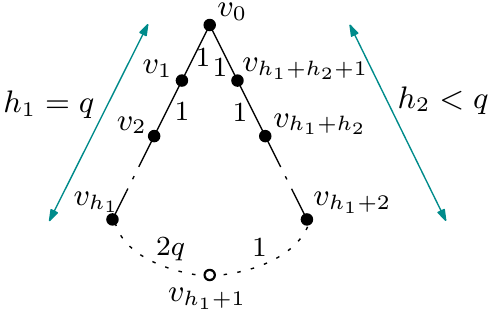}
  \caption{Ring $C_2^{(h_1 + h_2 + 2, h_1)}\in \cC_2$.}
  \label{fig:onlineC2}
\end{subfigure}
\caption{Illustration of rings from the classes $\cC_1$ and $\cC_2$; \emph{black dotes} and \emph{solid lines} denote already explored part of the graph, whereas \emph{circles} and \emph{dashed lines} stand for the unvisited part of the rings.}
\label{fig:onlineC}
\end{figure}

\end{proof}

At the end we observe, that for $q=0$ the strategy returned by the procedure $\RingOn$ is cost-optimal for every ring.

\section{Trees in the Off-line Setting}\label{sec:TreesOff}

Let $T = (V,E,w) \in \cG$ be a tree rooted in a homebase $r$ and $\leaf{T}$ be the set of all leaves in $T$. For every $v \in V$  we denote by $\subtree{v}$ a subtree of $T$ rooted in $v$, $c(v)$ list of its children and $p(v)$ its parent vertex.

Vertex $v \in V$ is called a \emph{decision vertex} if $|c(v)| \geq 2$ and an \emph{internal vertex} if $|c(v)| = 1$ and $v$ is different from the root. We say that an agent \emph{terminates} in $v \in V$, if $v$ is its last visited vertex. 
We state the problem in the off-line setting formally:

\begin{description}
\item[Off-line Tree Problem Statement]$\ $\\[2mm]
Given the tree $T$, the invoking cost $q$ and the homebase in the root of $T$, find a strategy of the minimum cost.
\end{description}

\subsection{The Algorithm}\label{sec:algorithm}

In order to simplify our algorithm, a \emph{compressing} operation on a tree $T$ is proceeded. Let $v \in V(T)$ be a decision vertex and $u \in V(T)$ be a decision vertex, a leaf or the root. 
The new tree $T'$ is obtained by substituting every path $P_T(v,u)$, which apart from $u$ and $v$ consists only internal vertices, with a single edge $e=(v,u)$. The weight of $e$ is set as the weight of the whole path, i.e., 
$w(e) = w(P_T(v,u))$. 
See Figure \ref{fig:comp1} for an example of the compressing operation.


\begin{figure}[htb]
  \centering
  \begin{subfigure}{.45\textwidth}
    \centering
  \includegraphics[scale=0.7]{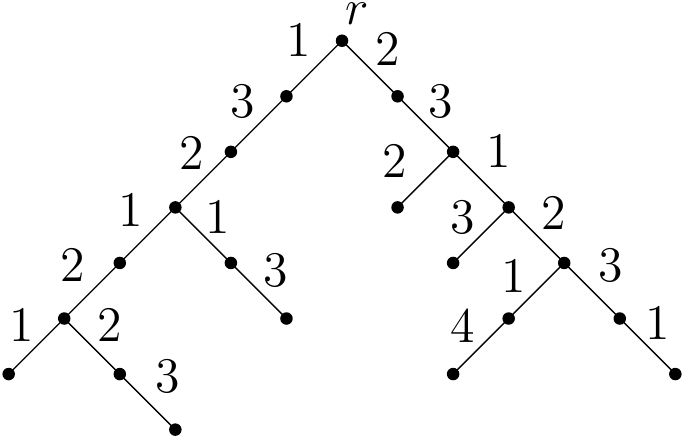}
  \caption{The original tree $T$.}
  \end{subfigure} \quad
  \begin{subfigure}{.45\textwidth}
    \centering
  \includegraphics[scale=0.9]{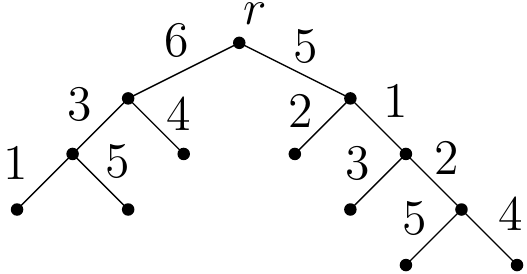}
  \caption{Tree $T$ after the compressing operation.}
  \end{subfigure}
\caption{The compressing operation on an exemplary tree $T$. The new tree $T'$ has no internal vertices.}
\label{fig:comp1}
\end{figure} 

\begin{observation}\label{obs:leaf}
\textit{In every cost-optimal strategy if an agent enters a subtree $T_v$, it has to explore at least one leaf in it.}
\end{observation}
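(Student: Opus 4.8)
The plan is to argue by contradiction: suppose some cost-optimal strategy $\cS$ sends an agent $a$ into the subtree $\subtree{v}$ but $a$ explores no leaf of $\subtree{v}$. First I would isolate what $a$ actually accomplishes inside $\subtree{v}$: it enters through the edge $(p(v),v)$, wanders around some connected (explored-by-$a$) portion of $\subtree{v}$, and eventually either leaves $\subtree{v}$ again through $v$ or terminates at some vertex $u\in V(\subtree{v})$. Since $a$ explores no leaf, every vertex that $a$ is the first to visit lies strictly above all leaves, i.e. each such vertex still has an unexplored child at the moment $a$ is done; in particular the whole of $\subtree{v}$ cannot have been explored by $a$ alone, so every vertex $a$ newly explores must also be visited later by some other agent (or by $a$ on a later re-entry) in order to reach the leaves below it.

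The key step is then a local exchange/short-cut argument showing $\cS$ was not optimal. Consider the set $X$ of vertices in $\subtree{v}$ that $a$ is the \emph{first} agent to explore. If $X=\emptyset$, then $a$'s excursion into $\subtree{v}$ explores nothing new, so deleting that portion of $a$'s walk (splicing $a$'s trajectory to skip the detour, or if $a$ terminates inside $\subtree{v}$, having $a$ terminate at $p(v)$ or at whatever vertex it was at before entering) yields a valid strategy of strictly smaller cost — contradiction. If $X\neq\emptyset$, pick a vertex $x\in X$ that is deepest among vertices of $X$; because $x$ is not a leaf and $a$ did not explore any leaf, $x$ has a child $y$ that is first explored by some \emph{other} agent $b$ (possibly $b=a$ on a later visit), and $b$ must therefore traverse the edge $(x,y)$ and hence must itself visit $x$. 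So $x$ is visited by at least two distinct agent-visits, and the first of these (by $a$) can be removed: re-route $a$ so that it stops short of $x$, and let $b$'s visit do the ``first exploration'' of $x$ and of everything in $X$ at or below $x$ — iterating this, all of $X$ gets absorbed by other agents' visits. Each such re-routing only deletes edge traversals from $a$ (and never invokes a new agent), so the cost strictly decreases, contradicting optimality of $\cS$.

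Concretely I would phrase the final contradiction as follows: among all cost-optimal strategies, take one, say $\cS$, and among all agents in $\cS$ that enter some subtree without exploring a leaf of it, consider such an agent $a$ and subtree $\subtree{v}$ with $|X|$ minimal (or with the deepest ``responsibility'' vertex); the exchange above produces a strategy with strictly smaller cost, which is impossible. One clean way to package the re-routing: let $z$ be the last vertex on $a$'s walk that is \emph{not} in the closure of $a$'s excursion into $\subtree{v}$ that added only $X$-vertices; replace the part of $a$'s walk from $z$ onward that stays within that excursion by the empty walk, and for each $x\in X$ transfer the label ``explored by $a$'' to the first later agent-visit to $x$ (which exists, as argued). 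Validity of the new strategy follows because every vertex is still explored and every agent's walk is still a walk from the homebase; the cost drops by at least the weight of the removed edge(s), which is positive since $w:E\to\ratplus$.

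The main obstacle I expect is bookkeeping, not conceptual difficulty: one must be careful that the re-routed walk of $a$ is still a legal walk starting at the homebase (so the deleted piece really is a detour that returns to a vertex $a$ already occupied, \emph{or} is a terminal segment), and that ``transferring'' the first-exploration label to another agent does not create a chicken-and-egg problem where two subtrees each wait on the other. Choosing the deepest vertex of $X$ (and doing the argument one subtree at a time, from the bottom up) resolves the latter; handling the case where $a$ terminates inside $\subtree{v}$ separately from the case where $a$ exits through $v$ resolves the former. Once these two cases are cleanly separated the cost comparison is immediate and the proof closes.
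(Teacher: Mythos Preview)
Your proposal is correct and follows essentially the same approach as the paper: delete the agent's excursion into $\subtree{v}$ and observe that every vertex it touched is either already explored or lies on a root-to-leaf path that some other agent must eventually traverse, so nothing is lost. The paper's proof is considerably terser --- it collapses your $X=\emptyset$/$X\neq\emptyset$ split and your deepest-vertex iteration into a single sentence noting that such vertices ``will be visited later anyway (by $a$ or any other agent)'' --- so your careful bookkeeping is sound but more elaborate than strictly needed.
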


\begin{proof}
Let $T$ be any tree, $v \in V$ and $a$ be an agent, which at some point
occupies $v$. If $a$ returns to $p(v)$ or terminates before exploring at least one leaf in $T_v$, then its moves inside $T_v$ can be omitted.
Indeed, in this situation $a$ has reached vertices either (1) already explored, or (2) one lying on the path between $v$ and an unexplored leaf, which will be visited later anyway (by $a$ or any other agent).
\end{proof}

\begin{observation}\label{obs:not_returning}
\textit{In every cost-optimal strategy once an agent leaves any subtree, it never comes back to it.}
\end{observation}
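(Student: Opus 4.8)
The plan is to argue by contradiction: suppose some cost-optimal strategy $\cS$ has an agent $a$ that leaves a subtree $T_v$ and later re-enters it. I would first fix notation: let $v\in V$, and consider the (unique) edge $e=(p(v),v)$ connecting $T_v$ to the rest of $T$. If $a$ leaves $T_v$ and comes back, then $a$ traverses $e$ at least three times in total (out, back in, and — since it left once — at some earlier point it must have entered, so the parity forces an odd number $\geq 3$ of crossings of $e$, hence at least two crossings in the direction from $p(v)$ into $T_v$). The key observation is that after $a$ first leaves $T_v$, everything $a$ does outside $T_v$ is independent of what happens inside $T_v$, and everything $a$ does on its \emph{second} visit to $T_v$ could instead be postponed and spliced into the \emph{first} visit.

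The main step is the surgery argument. Consider the portion of $a$'s walk during its first sojourn in $T_v$ (a closed sub-walk from $v$ back to $v$, or a walk from $v$ terminating somewhere in $T_v$ — but termination inside contradicts re-entry, so it is a closed sub-walk $v \leadsto v$), and the portion during its second sojourn. I would build a modified strategy $\cS'$ in which, upon first entering $T_v$, agent $a$ performs the union of the two sojourns before ever leaving: concretely, it does the first closed walk, then (without crossing $e$) immediately does a walk realizing whatever exploration the second sojourn accomplished, then leaves $T_v$ once and proceeds exactly as $\cS$ did outside $T_v$, skipping the now-redundant second entry. Since $T_v$ is a tree, any set of vertices reachable by a walk from $v$ inside $T_v$ is reachable by a walk from $v$ inside $T_v$, so this is well-defined; and by Observation \ref{obs:not_returning}'s companion reasoning (cf. Observation \ref{obs:leaf}), any vertices explored outside $T_v$ "between" the two sojourns are handled identically by $\cS'$ since $a$'s state outside $T_v$ is unchanged. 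The total distance does not increase — in fact it strictly decreases by at least $2w(e)>0$, the two saved crossings of $e$ — and the number of invoked agents is unchanged, so $\cS'(T) < \cS(T)$, contradicting optimality of $\cS$. One then notes the statement for "any subtree" follows by applying this to $T_v$ for the relevant $v$, since every subtree in question is rooted at some vertex.

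The main obstacle I anticipate is handling the bookkeeping of \emph{interleaving}: between $a$'s two visits to $T_v$, other agents (or $a$ itself, outside $T_v$) may explore vertices, and in the on-line-flavored formal model one must be careful that the rearranged strategy still only uses information available at each step and that no vertex exploration is "lost" — in particular, that a vertex of $T_v$ explored during $a$'s second sojourn is not one that some \emph{other} agent already explored in the meantime (in which case $\cS'$ simply skips it, making the saving even larger). Making precise that moving $a$'s later in-$T_v$ work earlier cannot invalidate any other agent's moves requires the same "edge $e$ is a cut edge, so the two sides are independent" principle used implicitly throughout; I would isolate this as the crux and spell out that no agent's move outside $T_v$ depends on the internal state of $T_v$, and vice versa.
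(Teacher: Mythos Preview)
Your proposal is correct and follows essentially the same surgery argument as the paper: both split the agent's walk around the cut edge $(p(v),v)$ and reorder the pieces so that the two sojourns in $T_v$ are merged into one, saving the two extra crossings of that edge and thereby strictly lowering the cost. The only cosmetic difference is the order of reassembly (the paper puts the outside excursion $W^3$ before the merged inside work, whereas you put the merged inside work first), and your worry about interleaving with other agents is unnecessary here since the observation is stated and used in the off-line setting, where cost is purely additive and no move depends on another agent's state.
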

\begin{proof}
Let $T$ be any tree rooted in $r$ and $v\in V$ different than $r$. By contradiction, let $\cS$ be the cost-optimal strategy for $T$, in which an agent $a$ after leaving $T_v$ returns to it. Denote as $l_a$ the leaf in which $a$ terminates (not necessarily $l_a \in \leaf{T_v}$). We split the walk $W(r,l_a)$ traversed by $a$ into parts. We define:
\begin{itemize}
\item $W^1(r,p(v))$ as a walk that $a$ traverses until it reaches $v$, excluding~$v$;
\item $W^2(v,v)$ as a walk that $a$ traverses inside $T_v$ before it leaves it;
\item $W^3(p(v),v)$ as a walk that $a$ traverses after leaving $T_v$ and before reaching~$v$;
\item $W^4(v,l_a)$ as a walk that $a$ traverses after $W^3(p(v),v)$.
\end{itemize}

In other words
\begin{align}
    W(r,l_a) & = W^1(r,p(v)) \circ (p(v),v) \circ W^2(v,v) \circ \\
    & \circ (v,p(v)) \circ W^3(p(v),v) \circ W^4(v,l_a). 
\end{align}

The total weight is then the following sum:
\begin{align}
    w(W(r,l_a)) & = w(W^1(r,p(v))) + w((p(v),v)) + w(W^2(v,v)) + \\
    & + w((v,p(v)))+ W^3(p(v),v) + w(W^4(v,l_a)). 
\end{align}

Let $S'$ be a strategy in which $a$ traverses: (1) $W^1(r,p(v))$, (2) $W^3(p(v),v)$, (3) $W^2(v,v)$ and (4) $W^4(v,l_a)$. Then 
\begin{equation}
  w(W(r,l_a)) = w(W^1(r,p(v))) + W^3(p(v),v) + w(W^2(v,v)) + w(W^4(v,l_a))  
\end{equation}

and $\cS'(T) < \cS(T)$, which is a contradiction that $\cS$ is cost-optimal.
\end{proof}

\begin{remark}\label{rem:compress}
Let $v$ be any internal vertex.
It is never optimal for an agent, which occupies $v$, to return to the previously occupied vertex in its next move.
\end{remark}
\begin{proof}
 Let $T$ be any tree and $v \in V$ be any internal vertex. 
Assume that at some step of the optimal strategy, agent $a$ occupies $v$. 
If the last traversed edge of $a$ is $(p(v),v)$, then it follows directly from Observation~\ref{obs:leaf}. On the other hand, the remark is true otherwise from Observation~\ref{obs:not_returning}.
\end{proof}

In other words, it is always optimal for agents to continue movement along the path once entered. 
Thus, if we find the optimal strategy for compressed tree $T'$, then we can easily obtain the optimal strategy for $T$.
The only difference is that instead of walking along one edge $(v,u)$ in $T'$, the agent has to traverse the whole path $P_T(v,u)$ in $T$. From now on, till the end of this Section, whenever we talk about trees, we refer to its compressed version.

For all vertices $v \in V(T)$ we consider a labeling $\Lambda_v$, which is a triple $(k,u_l,u_c)$, where $k$  stands for the minimum number of agents needed to explore the whole subtree $T_v$ by any cost-optimal strategy. The second one, $u_l$, is the furthest leaf from $v$ in $T_v$ (if there is more than one, then $v$ is chosen arbitrary) and $u_c$ is the child of $v$, such that $u_l \in T_{u_c}$. We will refer to this values using the dot notation, e.g., the number of agents needed to explore tree rooted in $v$ is denoted by $\Lambda_v.k$. The set of labels for all vertices is denoted by $\LambdaSet = \{\Lambda_v,\ v \in V(T)\}$.

\subsubsection{Procedures}

The algorithm is built on the principle of dynamic programming: first the strategy is set for leaves, then gradually for all subtrees and finally for the root. We present three procedures: firstly, labeling $\LambdaSet$ is calculated by $\ProcLabel$, which is the main core of our algorithm. Once labels for all the vertices are set, the procedure $\ProcStrat$ builds a strategy based on them. The main procedure $\ProcMain$ describes the whole algorithm.

\paragraph*{Procedure $\ProcLabel$} for every subtree $T_v$, calculates and returns labeling 
$\Lambda_v$. We give a formal statement of the procedure and its informal description followed by an example. Firstly, for every leaf $v$ label $\Lambda_v = (1,v,\text{null})$ is set, as one agent is sufficient to explore $v$. Then, by recursion, labels for the ancestors are set until the root $r$ is reached. Let us describe now how the labeling for the vertex $v$ is established based on the labeling of its children (main loop, lines \ref{alg:loopS}-\ref{alg:loopE}).  Firstly, the number of needed agents for $v$ is increased by the number of needed agents for its child $u$ (line \ref{alg:lab1a}). Then, if the distance between $v$ and the furthest leaf in $T_u$ (i.e., $d(v,\Lambda_u.u_l)$) is less or equal to the distance from the root $r$ to $v$ plus the invoking cost $q$, the number of required agents is reduced by $1$ (lines \ref{alg:lab1b}-\ref{alg:lab1c}). Intuitively, it is more efficient to reuse this agent, than to invoke a new one from $r$. As we show formally later at most one agent can be returned, and it can happen only if $\Lambda_u.k = 1$.
Meanwhile the child of $v$, which is an ancestor of the furthest leaf in $T_v$ is being set (lines \ref{alg:lab1d}-\ref{alg:lab1e}). See the formal statement of the procedure and an example on the Figure~\ref{fig:setlabel1}. 

\begin{algorithm}
   \caption{$\ProcLabel$}
   \begin{algorithmic}[1]
   	\Require Tree $T$, vertex $v$, invoking cost $q$, labeling $\LambdaSet$
    \Ensure Updated $\LambdaSet$
    	\If{$v \in \leaf{T}$}
            \State $\Lambda_v \leftarrow (1,v,\text{null})$
            \State return $\LambdaSet$
        \EndIf       
    \ForAll{$u \in c(v) $}
            \State Invoke Procedure $\ProcLabel$ for $T, u, q$ and $\LambdaSet$       
      \EndFor
   
		\State $k, d^{max} \leftarrow 0$
        \State $u_c^{max} \leftarrow \text{null}$       
        \State $d_r \leftarrow d(r, v) + q$
    	\ForAll{$u \in c(v)$}  \label{alg:loopS}
            \State $k \leftarrow k + \Lambda_u.k$ \label{alg:lab1a}
            \State $d \leftarrow d(v, \Lambda_u.u_l)$
        	\If{$\Lambda_u.k == 1$ and $d \leq d_r$} \label{alg:lab1b}
                  \State $k \leftarrow k - 1$                    
            \EndIf  \label{alg:lab1c} 
            \If{$d > d^{max}$} \label{alg:lab1d} 
                 \State $d^{max} \leftarrow d$
                 \State $u_c^{max} \leftarrow u$
            \EndIf \label{alg:lab1e} \label{alg:loopE}
        \EndFor   
        \State $k \leftarrow \max\{1,k\}$
       \State $\Lambda_v \leftarrow (k, \Lambda_{u_c^{max}}.u_l, u_c^{max})$
       \State return $\LambdaSet$
  \end{algorithmic}
  \end{algorithm}


\begin{figure}[htb]
  \centering
  \begin{subfigure}{.45\textwidth}
    \includegraphics[width=\textwidth]{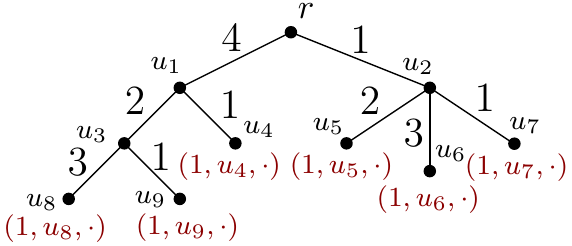}
    \caption{Firstly, labels for leaves are set.}
  \end{subfigure} \quad
  \begin{subfigure}{.45\textwidth}
    \includegraphics[width=\textwidth]{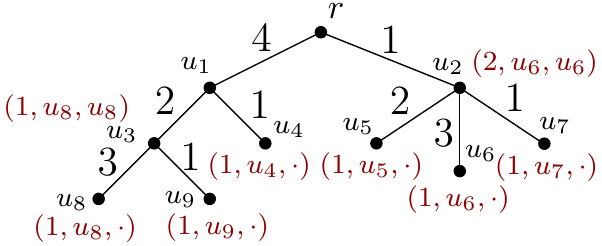}
    \caption{Then, gradually labels are being set for the ancestors, until the root is reached.}
  \end{subfigure} \\
   \begin{subfigure}{.45\textwidth}
    \includegraphics[width=\textwidth]{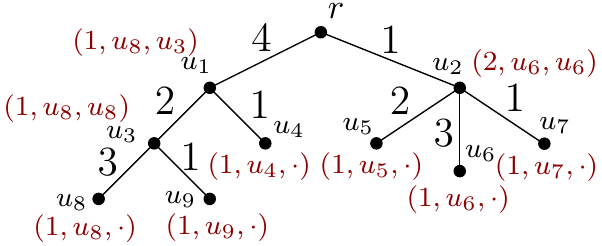}
    \caption{}
    \end{subfigure}\quad
    \begin{subfigure}{.45\textwidth}
    \includegraphics[width=\textwidth]{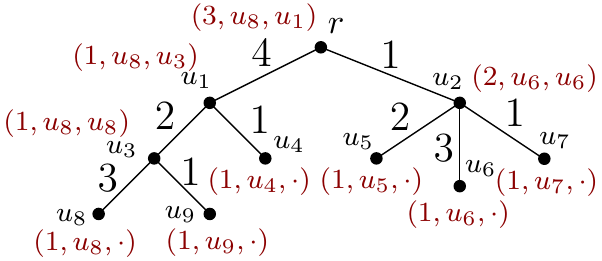}
    \caption{Three agents are required to explore this tree in the cost-optimal way.}
  \end{subfigure}
  
\caption{Example of the performing of the procedure $\ProcLabel$ for $q=0$. }
\label{fig:setlabel1}
\end{figure}

\paragraph*{Procedure $\ProcStrat$} builds a strategy for a given subtree $T_v$ based on the labeling $\LambdaSet$. If $v \in V(T)\backslash \leaf{T}$, then for each of its child $u$, firstly, the required number of agents is sent to $u$ (line \ref{alg:str1a}) and then the strategy is set for $u$ (line \ref{alg:str1b}). Lastly, for all children $u$ of $v$ (apart from the one, which has to be visited as the last one) if it is efficient for the searcher, which finished exploration of $T_u$ in $\Lambda_u.u_l$, to come back to $v$, then the `return' sequence of moves is added (lines \ref{alg:str1c}-\ref{alg:str1d}). It is crucial that for every $v$ the subtree $T_{\Lambda_v.u_c}$ is explored as the last one, but the order of the remaining subtrees is not important (line \ref{alg:children}).
To summarize, we give a formal statement of the procedure.

\begin{algorithm}
   \caption{$\ProcStrat$}
   \begin{algorithmic}[1]
   	\Require Tree $T$, vertex $v$, invoking cost $q$, labeling $\LambdaSet$, strategy $\mathcal{S}$
    \Ensure Strategy $\mathcal{S}$	
    	\If{$v \not\in \leaf{T}$}
        	\If{$v == r$}
            	\State Add a move to $\mathcal{S}$: invoke $\Lambda_r.k$ agents in $r$
            \EndIf
        \State $d_r \leftarrow d(r, v) + q$
        \State Let $c_1,\ldots,c_l$ be children of $v$, where $c_l = \Lambda_v.u_c$ \label{alg:children}
        \For{$i\in\{1,\ldots,l\}$}
        	\State Add a sequence of moves to $\mathcal{S}$: traverse $(v,c_i)$ by $\Lambda_{c_i}.k$ agents \label{alg:str1a}
            \State Invoke Procedure $\ProcStrat$ for $T,c_i,q,\LambdaSet$ and $\mathcal{S}$ \label{alg:str1b}
        	\If{$d(v, \Lambda_{c_i}.u_l) \leq d_r$ and $c_i \neq \Lambda_v.u_c$} \label{alg:str1c}
                  \State Add a sequence of moves to $\mathcal{S}$: send an agent back from $\Lambda_{c_i}.u_l$ to $v$                   
            \EndIf  \label{alg:str1d}    
           \EndFor
        \EndIf
  \end{algorithmic}
\end{algorithm}

\paragraph*{Procedure $\ProcMain$} consists of two procedures presented in the previous subsections. Firstly, $\ProcLabel$ is being invoked for the whole tree $T$. And then the strategy $\mathcal{S}$ is being calculated from the labeling $\LambdaSet$ by the procedure $\ProcStrat$. We observe that $\ProcMain$ finds a strategy in $O(n)$ time. To summarize, we give a formal statement of the procedure.

\begin{algorithm}
   \caption{$\ProcMain$}
   \begin{algorithmic}[l]
   	\Require Tree $T$, invoking cost $q$
    \Ensure Strategy $\mathcal{S}$
        \State Invoke Procedure $\ProcLabel$ for $T, r, q$ and $\emptyset$; set $\LambdaSet$ as an output
        \State Invoke Procedure $\ProcStrat$ for $T,r,q,\LambdaSet$ and $\emptyset$; set $\mathcal{S}$ as an output
        \State  Return $\mathcal{S}$
  \end{algorithmic}
\end{algorithm}

\subsection{Analysis of the Algorithm} \label{sec:results} 

In this Section, we analyze the algorithm by providing the necessary observations and lemmas
and give the lower and upper bounds. Firstly, let us make a simple observation about the behavior of agents in the cost-optimal strategies. 

\begin{observation}\label{obs:leaf_terminates}
\textit{In every cost-optimal strategy all agents terminates in leaves and every leaf is visited exactly once.}
\end{observation}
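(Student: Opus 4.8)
The plan is to prove both parts of the statement separately: (i) every agent terminates in a leaf, and (ii) every leaf is visited exactly once. Throughout I would work with the compressed tree, so that every non-root non-leaf vertex is a decision vertex, and I would freely invoke Observations~\ref{obs:leaf} and~\ref{obs:not_returning} together with Remark~\ref{rem:compress}.

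First I would show that no agent terminates at an internal or decision vertex. Suppose, for contradiction, that in a cost-optimal strategy $\cS$ some agent $a$ terminates at a vertex $v$ that is not a leaf. Let $e=(p(v),v)$ be the last edge $a$ traversed before stopping (this is well defined, since by Remark~\ref{rem:compress} and Observation~\ref{obs:not_returning} the last edge $a$ uses cannot be $(v,c)$ for a child $c$ followed by an immediate return, and if $a$ ever left $T_v$ it would not be back). Since $v$ is not a leaf, $v$ has at least one child $c$; by Observation~\ref{obs:leaf} applied at $v$, the agent $a$ entering $T_v$ must explore at least one leaf of $T_v$ — but terminating at $v$ means it explores no vertex of $T_v$ strictly below $v$, a contradiction. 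Hence the sub-walk of $a$ that reaches $v$ can be deleted, strictly decreasing the cost, contradicting optimality. Therefore every agent terminates in a leaf.

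Next I would establish that every leaf is visited at least once: this is immediate, since the strategy must explore every vertex, in particular every leaf. The remaining and main point is that no leaf is visited \emph{more than} once. Suppose some leaf $\ell$ is visited by (the walk of) an agent $a$ but $a$ does not terminate there; then after reaching $\ell$ the agent must traverse the single edge incident to $\ell$ in the reverse direction — but $\ell$ is already explored, so by the same deletion argument as in Observation~\ref{obs:leaf} the portion of $a$'s walk from the first arrival at $\ell$ onward that merely retraces that pendant edge can be removed, contradicting optimality; so whenever an agent's walk includes a leaf, that leaf is its terminal vertex. Combined with part (i), each agent contributes exactly one leaf-visit (its termination), and two distinct agents cannot terminate in the same leaf $\ell$: if they did, the second agent to arrive reaches only the already-explored vertex $\ell$ (and the pendant edge leading to it, already traversed), so again its final sub-walk is deletable, lowering the cost. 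Thus the leaves visited by the agents are pairwise distinct, and since every leaf is visited, each leaf is visited exactly once.

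The step I expect to be the main obstacle is the careful bookkeeping in part (i): ruling out termination at an internal vertex requires being precise about what ``the last edge traversed'' is and invoking Remark~\ref{rem:compress} (on compressed trees, an agent never backtracks along the path it entered), so that the deletion of the final segment of $a$'s walk is legitimate and does not disconnect the remaining walk. Once that is set up cleanly, the ``exactly once'' claim for leaves follows by the same exchange/deletion idea, so the real work is entirely in justifying the walk surgery.
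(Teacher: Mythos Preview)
Your argument for part (i) has a genuine gap. You assert that the last edge traversed by the agent $a$ terminating at a non-leaf $v$ is $(p(v),v)$, and then conclude that ``terminating at $v$ means it explores no vertex of $T_v$ strictly below $v$.'' Neither claim holds. Consider an agent that enters $T_v$, descends into a child subtree $T_c$, explores a leaf there, returns to $v$, and stops: the last edge is $(c,v)$, not $(p(v),v)$, and the agent \emph{did} explore a leaf of $T_v$, so Observation~\ref{obs:leaf} is satisfied and your contradiction never fires. Your invocations of Remark~\ref{rem:compress} and Observation~\ref{obs:not_returning} do not rule this out: Remark~\ref{rem:compress} concerns internal vertices (of which there are none after compression), and Observation~\ref{obs:not_returning} only prevents re-entering $T_v$ after leaving it, which the agent in this scenario never does. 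The paper's proof avoids this by a different and cleaner route: it first uses Observation~\ref{obs:leaf} to guarantee $a$ has visited \emph{some} leaf, takes $u$ to be the \emph{last} leaf $a$ visited, and then observes that every vertex $a$ touches after $u$ lies on a root--to--leaf path and hence is either already explored or will be explored later, so the entire tail of $a$'s walk after $u$ may be truncated.

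Your argument for part (ii) also breaks. You claim that ``whenever an agent's walk includes a leaf, that leaf is its terminal vertex,'' but this is simply false: take the star $K_{1,2}$ with unit edges and large invoking cost $q$; the optimal single agent goes $r\to\ell_1\to r\to\ell_2$, visiting $\ell_1$ without terminating there. The deletion you propose (removing the backward traversal of the pendant edge) is not a legitimate walk surgery, since the agent's subsequent moves start from $p(\ell)$, not from $\ell$. The correct reading of ``every leaf is visited exactly once'' is that no leaf receives two visits in total; given Observation~\ref{obs:not_returning} this means no two agents visit the same leaf, and if a second agent did so its $p(\ell)\to\ell$ step (and the possible return) could be excised without losing any exploration, contradicting optimality. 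The paper simply calls this part ``obvious.''
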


\begin{proof}
Let $T$ be any tree and $\cS$ be the cost-optimal strategy for $T$ in which an agent $a$ terminates in $v \in V(T)\backslash \leaf{T}$. By Observation~\ref{obs:leaf} agent $a$ has explored at least one leaf. Let $u \in \leaf{T}$ be the last explored leaf by $a$. 
Notice, that every vertex visited by $a$ after it leaves $u$ lies on a path between root and some other leaf, which means that either it has been already explored or will be later by some other agents. Thus, these moves are unnecessary and $\cS(T)$ is not minimal. The latter part of the observation is obvious.
\end{proof}

In our strategies, subtrees $T_v$ of the maximum height $d(r,v) + q$ are always explored by one agent. The next observation says that in the cost-optimal solution this agent finishes in the furthest leaf of $T_v$.

\begin{observation}\label{obs:furthest_leaf}
\textit{If one agent is cost-optimal to search a tree $T$, then it terminates in one of the furthest leaves.}
\end{observation}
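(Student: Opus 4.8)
The plan is to show that, for a single agent, the cost depends on the traversal only through the choice of the leaf in which the agent terminates, and that this dependence is monotone in the root-to-leaf distance. Concretely, I would prove that an optimal single-agent strategy for $T$ has cost $q + 2w(T) - \max_{l \in \leaf{T}} d(r,l)$, so that any such optimal strategy must terminate in a leaf attaining this maximum.

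First I would invoke Observation~\ref{obs:leaf_terminates}: the unique agent terminates in some leaf $l$, and the cost of the strategy equals $q$ plus the weight $w(W(r,l))$ of the walk $W(r,l)$ it performs from the homebase $r$. The heart of the argument is the lower bound $w(W(r,l)) \geq 2w(T) - d(r,l)$, which I would establish edge by edge. For an edge $e = (p(v),v)$ with $l \notin T_v$: the agent must visit $v$, so it crosses $e$ at least once; since it begins at $r$ and ends at $l$, both outside $T_v$, the numbers of inward and outward crossings of $e$ are equal, forcing an even number, hence at least two crossings (Observation~\ref{obs:not_returning} in fact gives exactly two). For an edge $e$ with $l \in T_v$ --- equivalently, $e$ lies on the path $P(r,l)$ --- it is crossed at least once. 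Summing, $w(W(r,l)) \geq 2 \sum_{e \notin P(r,l)} w(e) + \sum_{e \in P(r,l)} w(e) = 2w(T) - d(r,l)$. Then I would note that the bound is tight: the depth-first traversal that, at every branching vertex, explores the subtree containing $l$ last and terminates at $l$ crosses each edge off $P(r,l)$ exactly twice and each edge of $P(r,l)$ exactly once, for total weight $2w(T) - d(r,l)$.

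Combining, the optimal single-agent cost is $q + 2w(T) - d(r,l)$ minimized over leaves $l$, i.e.\ it equals $q + 2w(T) - \max_{l \in \leaf{T}} d(r,l)$, and any optimal single-agent strategy terminates in a leaf realizing this maximum --- a furthest leaf. The main obstacle I anticipate is making the per-edge crossing count airtight: correctly using the parity argument (the start and the terminus lie on the same side of each non-doubled edge), and identifying the edges that need not be doubled as exactly those of $P(r,l)$. The degenerate case $T = \{r\}$, in which $r$ is itself the unique furthest leaf at distance $0$, is immediate.
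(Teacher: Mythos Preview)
Your proposal is correct and follows essentially the same approach as the paper: both arguments reduce to the formula $q + 2w(T) - d(r,l)$ for a single agent terminating at leaf $l$, and then observe that this is minimized when $l$ is a furthest leaf. The only difference is presentational: the paper invokes Observation~\ref{obs:not_returning} to assert directly that the agent's walk is a truncated $DFS$ (hence has weight exactly $2w(T)-d(r,l)$), whereas you derive the lower bound by a per-edge parity count and then exhibit the matching upper bound.
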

\begin{proof}
Let $T$ be any tree rooted in $r$.
Let $\cS$ be the cost-optimal strategy for $T$, in which an agent $a$ terminates in leaf $l$ (Observation~\ref{obs:leaf_terminates}). 
Thanks to Observation~\ref{obs:not_returning} we notice that agent $a$ simply performs $DFS$ on $T$ truncated by moves from $l$ to $r$. In other words, the total cost is equal to $2w(T) - d(r,l)$, thus $l$ should be the furthest leaf.
\end{proof}

Let $v \in V(T)$ different then root. 
Lemma~\ref{lem:one_leaving} guarantees us that after the exploration of $T_v$ at most one agents returns to $p(v)$. Lemma~\ref{lem:label_ok} and Theorem~\ref{thm:main_offline} present our main results. 

\begin{lemma}\label{lem:one_leaving}
In every cost-optimal strategy if an agent leaves any subtree, it has explored it on its own.
\end{lemma}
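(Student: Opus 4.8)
The plan is to argue by contradiction, mirroring the structure of the proof of Observation~\ref{obs:not_returning}, but this time decomposing the walk of an agent that leaves $T_v$ while some other agent is also working inside $T_v$. Suppose $\cS$ is a cost-optimal strategy in which an agent $a$ enters $T_v$, does some exploration, leaves $T_v$ (crossing the edge $(v,p(v))$ outward), and — by Observation~\ref{obs:not_returning} — never returns to $T_v$. If $a$ explored $T_v$ on its own, there is nothing to prove, so assume at least one other agent $b$ also visits $T_v$. The first thing I would do is use Observation~\ref{obs:leaf} (applied to the moment $a$ occupies $v$) to note that $a$ must explore at least one leaf of $T_v$ before leaving, so $a$ does perform genuine exploration work inside $T_v$; combined with Observation~\ref{obs:leaf_terminates} (every leaf visited exactly once) and Observation~\ref{obs:not_returning}, the set of leaves of $T_v$ is partitioned among the agents that enter $T_v$, each agent's visit to $T_v$ is a single contiguous sub-walk, and the portions explored by $a$ and by $b$ are leaf-disjoint subtrees.

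Next I would set up the surgery. Let $W(r,l_a)$ be the walk of $a$, split as $W^1(r,p(v)) \circ (p(v),v) \circ W^2(v,v) \circ (v,p(v)) \circ W^3(p(v),l_a)$, where $W^2(v,v)$ is the closed sub-walk $a$ performs inside $T_v$ and $W^3$ is everything $a$ does after leaving. Since $a$ enters and leaves $T_v$ through $v$, both crossings of $(p(v),v)$ occur, contributing $2\,w(p(v),v)$. Now consider the agent $b$ whose exploration inside $T_v$ is "last" in the sense that its walk in $T_v$ can be rescheduled to run contiguously with $a$'s: I would reassign the leaves of $T_v$ that $a$ explored to $b$ instead — i.e., have $b$ perform $a$'s sub-walk $W^2(v,v)$ appended to $b$'s own exploration of $T_v$ — and delete $a$'s detour into $T_v$ entirely, so that $a$ now goes directly along $W^1(r,p(v)) \circ W^3(p(v),l_a)$ without ever dipping into $T_v$. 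The new strategy $\cS'$ explores exactly the same vertex set: every leaf of $T_v$ formerly explored by $a$ is now explored by $b$, and every vertex outside $T_v$ is untouched. Crucially, $\cS'$ uses no more agents than $\cS$ (we removed an agent's presence from $T_v$, we did not add one), so the invoking-cost term $kq$ does not increase, while the total traversed distance strictly decreases by $2\,w(p(v),v) > 0$ (the two crossings of $(p(v),v)$ by $a$ are gone, and $b$'s extra work $W^2(v,v)$ is exactly $a$'s former work inside $T_v$, with no new edge-crossings needed because $b$ was already at $v$). This contradicts cost-optimality of $\cS$.

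The main obstacle is the scheduling/feasibility bookkeeping: one must check that the rescheduled walk for $b$ is actually a legal walk — that $b$ can be at $v$ at the right time to take over $a$'s portion — and that reassigning leaves does not force $b$ to cross $(v,p(v))$ or $(p(v),v)$ any extra times or require an additional agent. Here the earlier structural observations do the heavy lifting: by Observation~\ref{obs:not_returning} each agent's visit to $T_v$ is a single contiguous episode, so "$b$ absorbs $a$'s work" just means concatenating two closed sub-walks at $v$, which is always legal; and since $a$'s sub-walk $W^2(v,v)$ starts and ends at $v$ and involves only edges of $T_v$, appending it to $b$'s episode adds no boundary crossings. The only subtlety is choosing $b$ to be an agent that genuinely enters $T_v$ (one exists by assumption) and noting that if $a$ is the $\emph{only}$ agent entering $T_v$ then "$a$ explored it on its own" already holds — which is the statement. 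I would also remark that the case where $a$ itself $\emph{terminates}$ inside $T_v$ is excluded by Observation~\ref{obs:leaf_terminates} and Observation~\ref{obs:not_returning}, so the only relevant scenario is the one handled above, completing the contradiction.
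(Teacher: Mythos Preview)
Your argument is correct and takes essentially the same route as the paper's: a contradiction by walk surgery in which one agent absorbs another agent's interior sub-walk in $T_v$, thereby saving the two crossings of $(v,p(v))$; the paper merely organizes the surgery in the opposite direction (a single designated agent absorbs \emph{all} leavers' interior work, with a case split on whether some agent terminates in $T_v$), whereas you handle one leaver/absorber pair, which already suffices for the strict cost decrease. One small wording fix: if $b$ terminates inside $T_v$ its episode there is not a closed walk, so $W^2(v,v)$ should be \emph{inserted} at the moment $b$ first reaches $v$ rather than ``appended'' afterwards---but since $W^2(v,v)$ is closed and $b$ passes through $v$, this insertion is legal and the cost accounting is unchanged.
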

\begin{proof}
Let $T$ be any tree rooted in $r$ and $v \in V$ different then $r$.
By contradiction, let $\cS$ be the cost-optimal strategy for $T$, in which $T_v$ is explored by at least two agents and at least one of them leaves $T_v$ at some step. 

Let $A$ be a group of agents, which terminates in leaves of $T_v$ and $B$ a group of agents, which visits at least one vertex of $T_v$, but terminates in leaves outside $T_v$. From the assumption we have that $|A \cup B| \geq 2$ and $B \neq \emptyset$. For every $a \in A \cup B$ let $u_a$ be the last visited leaf from $T_v$ (which existence is guaranteed by Observation~\ref{obs:leaf}) and let $a$ terminates in $l_a$. Thanks to the Observation~\ref{obs:not_returning} we can split the walk $W_a(r,l_a)$ traversed by every agent $a \in A \cup B$ into parts. We define:
\begin{itemize}
\item $W^1_a(r,p(v)),\ a \in A \cup B$ as a walk that $a$ traverses until it reaches $v$ for the first time, excluding~$v$;
\item $W^2_a(v,u_a),\ a \in A \cup B$ as a walk that $a$ traverses inside $T_v$ until it explores $u_a$;
\item $W^3_a(p(v),l_a),\ a \in B$ as a walk that $a$ traverses after leaving $T_v$, excluding $v$.
\end{itemize}

We obtain that, 
\begin{align}
   w(W_a(r,l_a)) & = w(W_a(r,u_a)) = w(W^1_a(r,p(v))) \\
   & + w((p(v),v)) + w(W^2_a(v,u_a)),
\end{align}
   for every $a \in A$, and 
   
   \begin{align}
       w(W_a(r,l_a)) & = w(W^1_a(r,p(v)) + w((p(v),v)) + w(W^2_a(v,u_a)) \\
       & + d(u_a,p(v)) +  w(W^3_a(p(v),l_a),
   \end{align}
  for every $a \in B$. We consider two cases.

\paragraph*{Case A: \textmd{\textit{$A \neq \emptyset$.}}}
We choose and arbitrary agent $a' \in A$ and modify its walk, so after $W^1_{a'}(r,p(v))$ it traverses all the walks $W^2_a(v,u_a),\ a \in B$ returning every time to $v$.  
All of the agents $a \in B$ traverses first $W^1_a(r,p(v))$ and then $W^3_a(p(v),l_a)$, i.e., there is no agent that leaves $T_v$. Obtained in that way $\cS'$ is a proper strategy, which explores the whole tree $T$. Let now $L$ and $L'$ be the total distances traversed by agents from $A \cup B$ in $\cS$ and $\cS'$, respectively. We get the following:

\begin{align}
L& = \sum_{a\in A}\left(w(W_a^1(r,p(v))) + w((p(v),v)) + w(W_a^2(v,u_a))\right) + \\
&+ \sum_{a\in B}(w(W_a^1(r,p(v))) + w((p(v),v)) + w(W_a^2(v,u_a)) + \\
& + d(u_a,p(v)) + w(W_a^3(p(v),l_a)
),\\
L'& = \sum_{a\in A}\left(w(W_a^1(r,p(v))) + w((p(v),v)) + w(W_a^2(v,u_a))\right) + \\ 
&+ \sum_{a\in B}\left(w(W_a^1(r,p(v)) + w(W_a^3(p(v),l_a))\right) + \\
&+ \sum_{a\in B}\left(w(W_a^2(v,u_a)) + d(u_a,v)\right)< L,
\end{align}

which finishes the proof of first case. 

\paragraph*{Case B: \textmd{\textit{$A = \emptyset$.}}} We choose and arbitrary agent $a' \in B$ and modify its walk, so after $W^1_{a'}(r,p(v))$ it traverses all the walks $W^2_a(v,u_a),\ a \in B$ returning every time to $v$. 
All of the other agents $a \in B$ traverses firstly $W^1_a(r,p(v))$ and then $W^3_a(p(v),l_a)$, i.e., only $a'$ leaves $T_v$. Obtained in that way $\cS'$ is a proper strategy, which explores the whole tree $T$. Similarly to the previous case one can show that $S'(T) < S(T)$, i.e., we get the contradiction that $S$ is cost-optimal.

\end{proof}

\begin{lemma}\label{lem:label_ok}
Let $\LambdaSet$ be a labeling returned by the procedure $\ProcLabel$ for an arbitrary tree $T$. Every cost-optimal strategy uses at least $\Lambda_v.k$ agents to explore 
$T_v,\ v\in V(T)$. \footnote{There exist cost-optimal strategies that can use more than $\Lambda_v.k$ agents. Indeed, if $d(v,\Lambda_v.u_l) = d(r,v) + q$ reusing the agent and calling a new one generates equal cost.} 
\end{lemma}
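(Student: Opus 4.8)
The plan is to prove the lower bound $\Lambda_v.k$ by induction on the height of $v$ in the compressed tree $T$, reading off from the recursion in $\ProcLabel$ exactly what that quantity counts. For a leaf $v$ the claim is trivial: $\Lambda_v.k = 1$ and at least one agent must visit $v$. For the inductive step, fix a decision or internal vertex $v$ with children $c(v)$, and let $\cS$ be any cost-optimal strategy. I would first invoke Observation~\ref{obs:not_returning} and Lemma~\ref{lem:one_leaving} to pin down the structure of $\cS$ restricted to $T_v$: every agent that ever enters a subtree $T_u$ with $u\in c(v)$ either terminates in a leaf of $T_u$, or explores $T_u$ entirely on its own and then leaves it (Lemma~\ref{lem:one_leaving}); moreover, by Lemma~\ref{lem:one_leaving} applied to $T_u$ itself, \emph{at most one} agent leaves $T_u$, and if one does, it was the sole explorer of $T_u$. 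Hence each $T_u$ is explored by some number $k_u \ge 1$ of agents, of which at least $k_u - 1$ must terminate inside $T_u$ (and therefore inside $T_v$), with possibly one more ``passing through.''

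The next step is to bound $k_u$ below by the relevant quantity from the recursion. By the induction hypothesis, any cost-optimal strategy needs at least $\Lambda_u.k$ agents for $T_u$; but here I need a slightly sharper, local statement, because $\cS$ restricted to $T_u$ need not be cost-optimal \emph{for $T_u$ in isolation} — it may be cheaper globally to spend one extra agent in $T_u$ so that one can leave. The clean way around this is to count terminating agents: the number of agents of $\cS$ that terminate in leaves of $T_v$ is at least $\sum_{u \in c(v)} (\text{number terminating in } T_u)$, and an agent terminating in $T_u$ is, by Observation~\ref{obs:leaf_terminates}, one whose last leaf lies in $T_u$. I would argue that this count is at least $\Lambda_v.k$ directly: writing $\Lambda_v.k = \max\{1, \sum_{u\in c(v)} (\Lambda_u.k - [\Lambda_u.k = 1 \text{ and } d(v,\Lambda_u.u_l) \le d(r,v)+q])\}$, each summand is a lower bound on the number of agents that must terminate inside $T_u$: if $\Lambda_u.k \ge 2$ then at least $\Lambda_u.k - 1 \ge \Lambda_u.k - 1$ terminate inside by the induction hypothesis and the ``at most one leaves'' fact — and I must check $\Lambda_u.k - 1$ here can be sharpened to $\Lambda_u.k$ when the out-agent condition fails, i.e. when it is \emph{not} profitable for an agent to walk out of $T_u$ and back up to the root level. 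That profitability is governed precisely by the test $d(v,\Lambda_u.u_l) \le d(r,v)+q$ in line~\ref{alg:lab1b}: if this fails, rerouting the out-agent's exit as a fresh agent invoked at $r$ is strictly cheaper, contradicting cost-optimality, so no agent leaves $T_u$ and all $\Lambda_u.k$ of them terminate inside.

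I expect the main obstacle to be precisely this last point: making rigorous the exchange argument that converts ``an agent leaves $T_u$'' into ``invoke a new agent at the root instead,'' while correctly accounting for the case $\Lambda_u.k = 1$ (where leaving is genuinely allowed and saves the root-invocation $q$) versus $\Lambda_u.k \ge 2$ (where, after Lemma~\ref{lem:one_leaving}, only one agent could leave and the remaining $\ge 1$ still terminate inside, but we need the full count). The bookkeeping has to be done carefully because the strategy $\cS$ is optimal for $T$, not for $T_v$, so one cannot simply quote the induction hypothesis for $T_u$ in isolation; instead the induction hypothesis should be phrased (or re-derived in the proof) as a statement about the number of agents \emph{terminating} in each subtree in \emph{any} cost-optimal strategy for the ambient tree, which is exactly what Observation~\ref{obs:leaf_terminates} and Lemma~\ref{lem:one_leaving} make available. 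Once the terminating-agent count is shown to dominate each term of the sum defining $\Lambda_v.k$, taking the maximum with $1$ is immediate since every nonempty subtree needs at least one agent, and the induction closes.
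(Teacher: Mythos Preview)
Your proposal is correct and follows essentially the same route as the paper's own proof: induction on height, Lemma~\ref{lem:one_leaving} to bound the number of agents leaving each $T_u$, and an exchange argument (terminate in the furthest leaf of $T_u$ and invoke a fresh agent at $r$) to show that a departure from $T_u$ is only compatible with cost-optimality when $d(v,\Lambda_u.u_l)\le d(r,v)+q$. The paper's write-up is terser and glosses over precisely the subtlety you flag --- that the induction hypothesis concerns cost-optimal strategies for the ambient tree $T$, not for $T_u$ in isolation --- but the argument underneath is the same.

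One remark that will dissolve the uncertainty you voice about the case $\Lambda_u.k\ge 2$: Lemma~\ref{lem:one_leaving} says that if \emph{any} agent leaves $T_u$, that agent explored $T_u$ alone, so the number $k_u$ of agents entering $T_u$ equals $1$. By the induction hypothesis $k_u\ge\Lambda_u.k$, hence $\Lambda_u.k=1$. Contrapositively, whenever $\Lambda_u.k\ge 2$ no agent leaves $T_u$ at all, regardless of the distance test, and all $k_u\ge\Lambda_u.k$ agents terminate inside. So the ``sharpening from $\Lambda_u.k-1$ to $\Lambda_u.k$'' you worry about is automatic; the distance condition is only ever relevant in the $\Lambda_u.k=1$ branch, exactly as the indicator in line~\ref{alg:lab1b} records.
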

\begin{proof}
Let $T$ be any tree of the height $H$ rooted in $r$.
By the induction on the height of a tree $h$. Firstly, let $h = H$, i.e., we consider labeling of the set $\leaf{T}$, which is the base case in our procedure. Any cost-optimal strategy uses one agent to explore a leaf, thus $\Lambda_u.k = 1$ is correct. 

We assume now, that all the labeling is correct for vertices at levels greater than $h$ and consider vertices on the level $h$. Notice, that invoking $\ProcLabel$ in recursive way guarantees, that before the label on any vertex is computed, all of its children's labels are set. Let $v$ be any vertex of $T$ on level $0 \leq h < H$. The algorithm sets  
\begin{equation}
    \Lambda_v.k = \max\left\{\sum_{u \in c(v)}\left(\Lambda_u.k - \mathbbm{1}_{\left(d(v,\Lambda_u.u_l) \leq d(r,v)+q\right)} \right),1\right\}.
\end{equation}
Because once an agent leaves any subtree, it never comes back to it (Observation~\ref{obs:not_returning}), subtrees rooted in the children of $v$ can be searched sequentially.
Every $T_u,\ u \in c(v)$ needs $\Lambda_u.k$ agents, which is minimal from the induction assumption. We notice, that after exploring $T_u$ at most one agent might return to $v$ and be used to explore $T_v \backslash T_u$ (Lemma~\ref{lem:one_leaving}). It can happen only if $\Lambda_u.k = 1$. Thus, this agent has to finish in the furthest leaf of $T_u$ (Observation~\ref{obs:furthest_leaf}). The strategy $\cS$ reuses all agents (apart from the possible one from the last visited subtree $T_u$) which finishes the exploration of $T_u$ in the leaf, which is not `too far', i.e., $d(v,\Lambda_u.u_l) \leq d(r,v) + q$. Indeed, no other agent can be reused, because if $d(v,\Lambda_u.u_l) > d(r,v) + q$, then it is cheaper to call a new agent from the root. Thus, $\Lambda_v.k$ is minimal.
\end{proof}

\begin{theorem}\label{thm:main_offline}
Procedure $\ProcMain$ for every tree $T$ returns a strategy, which explores $T$ in the cost-optimal way.
\end{theorem}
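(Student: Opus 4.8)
The plan is to prove correctness in two parts: first that the cost $\ProcMain$ achieves is a lower bound for any cost-optimal strategy, and second that the strategy produced by $\ProcStrat$ actually achieves that cost. The number-of-agents part is essentially done: Lemma~\ref{lem:label_ok} already tells us that any cost-optimal strategy uses at least $\Lambda_r.k$ agents, and the invoking cost the algorithm pays is exactly $q\cdot\Lambda_r.k$. So the crux is to argue about the total \emph{distance} traversed, and to show that the greedy ``reuse an agent iff $d(v,\Lambda_u.u_l)\le d(r,v)+q$'' decisions, together with the ``explore $T_{\Lambda_v.u_c}$ last'' rule, are simultaneously distance-optimal and agent-optimal.

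First I would pin down what the cost of the strategy $\cS$ returned by $\ProcStrat$ actually is, by structural induction on the (compressed) tree, mirroring the recursion in $\ProcLabel$. For a leaf the cost is $0$ distance plus nothing. For an internal/decision vertex $v$ with children $c_1,\dots,c_l$ (with $c_l=\Lambda_v.u_c$), the agents entering $T_v$ split among the subtrees; within each $T_{c_i}$ the inductive cost applies, and on the connecting edge $(v,c_i)$ we pay $w(v,c_i)$ times the number of agents sent in, plus $w(v,c_i)$ again for the single returning agent exactly when $c_i\ne c_l$ and $d(v,\Lambda_{c_i}.u_l)\le d(r,v)+q$. Unwinding this recursion, the total distance equals $2w(T) - (\text{sum over the root-to-furthest-leaf path of its weight}) - (\text{savings from every reused agent})$, where a reused agent leaving $T_u$ to continue elsewhere saves one traversal of the path it would otherwise have had to backtrack, but costs at most $d(r,v)+q$ to re-route — and the condition $d(v,\Lambda_u.u_l)\le d(r,v)+q$ is exactly the break-even point. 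I would make this bookkeeping precise by defining, for each vertex, the ``local cost'' contributed and showing the algorithm minimizes the sum.

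Then I would show optimality by matching this against an arbitrary cost-optimal $\cS^{opt}$. By Observations~\ref{obs:leaf_terminates}, \ref{obs:not_returning}, \ref{obs:furthest_leaf} and Lemma~\ref{lem:one_leaving}, $\cS^{opt}$ has a very rigid shape: every agent performs a truncated DFS, subtrees are explored sequentially, at most one agent leaves any subtree and only if it explored that subtree alone (and then it terminates in the furthest leaf of that subtree). Consequently $\cS^{opt}$'s distance, restricted to any subtree $T_v$, decomposes exactly the same way as $\cS$'s: a fixed ``must-traverse twice'' part $2w(T_v)$ minus the savings from the one distinguished agent that exits $T_v$ and minus the savings from reused agents inside. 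An agent reused after finishing $T_u$ can only be the furthest-leaf agent of $T_u$, and reusing it is beneficial iff the detour $d(v,\Lambda_u.u_l)$ plus whatever it costs to be in position is no worse than paying $q$ for a fresh agent from $r$ — i.e. iff $d(v,\Lambda_u.u_l)\le d(r,v)+q$. So $\cS^{opt}$ cannot do better than $\cS$ on the number of agents (Lemma~\ref{lem:label_ok}) nor on the distance (the per-vertex savings are individually maximized by the algorithm's greedy test, and they are independent across children because of Observation~\ref{obs:not_returning}). An induction on tree height then gives $\cS(T)\le \cS^{opt}(T)$, hence equality.

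The main obstacle I anticipate is the distance accounting at a decision vertex: one must argue that the choice of which child plays the role of $\Lambda_v.u_c$ (the one explored last, from which no agent is forced back) is optimal, and that this choice does not interact badly with the reuse decisions at lower levels. Intuitively the ``last'' subtree should be the one whose furthest leaf is furthest from $v$, since that is the detour we most want to avoid paying twice — but one has to check this is consistent with the fact that the agent terminating in $\Lambda_v.u_l$ may itself be a reused agent from deeper down, and that ``not returning from the last child'' never conflicts with a deeper ``reuse'' that wanted that same agent. I would handle this with an exchange argument: given any cost-optimal schedule of entries/exits/reuses on $T_v$, show it can be transformed, without increasing cost, into the canonical one the algorithm produces, by (i) reordering children so the one with the largest $d(v,\cdot)$ comes last, and (ii) replacing any reused agent by the furthest-leaf agent of its subtree (legal by Observation~\ref{obs:furthest_leaf}), and (iii) flipping each reuse decision to the greedy one. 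Each step is cost-non-increasing, and the endpoint is exactly $\cS$; combined with the matching lower bound on agents this yields the theorem.
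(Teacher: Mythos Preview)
Your proposal is correct and, if anything, more thorough than the paper's own argument. The paper's proof leans almost entirely on Lemma~\ref{lem:label_ok}: it takes for granted that the greedy reuse test is cost-optimal (this is argued inside the proof of Lemma~\ref{lem:label_ok}, not here), and then the only thing it actually proves in Theorem~\ref{thm:main_offline} is that the \emph{order} in which the children of a decision vertex are visited does not change $\Lambda_v.k$ or the cost, as long as the child containing the furthest leaf is visited last. It partitions the children into $U_1=\{u:\ d(v,\Lambda_u.u_l)\le d(r,v)+q\}$ and $U_2=c(v)\setminus U_1$, checks the agent count in two orderings, and closes with a one-line observation that since subtrees are explored separately the total cost is unaffected.

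Your route is different in emphasis: you make the distance accounting explicit via a structural-induction cost formula, and you prove optimality by an exchange argument that normalises an arbitrary optimal strategy into the algorithm's canonical one. What this buys you is that you actually face the issue the paper glosses over, namely that choosing which child to visit last is itself an optimisation decision and must not conflict with reuse decisions deeper down; the paper's proof never isolates this point, whereas you correctly identify it as the main obstacle and sketch the three exchange steps that resolve it. One small wording slip: when you say the single agent leaving $T_u$ ``terminates in the furthest leaf of that subtree'', you mean it is \emph{positioned} at $\Lambda_u.u_l$ at the moment it leaves (it terminates elsewhere); this is exactly what the algorithm arranges and what Observation~\ref{obs:furthest_leaf} combined with Lemma~\ref{lem:one_leaving} justifies.
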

\begin{proof}
Let $T$ be any tree rooted in $r$, $\LambdaSet$ be the labeling computed by $\ProcLabel$ and $\cS$ be a strategy for $T$ returned by $\ProcMain$. 
Constructing $\cS$ for every $T_v,\ v \in V(T)$ based on $\LambdaSet$ is straightforward. Let $u \in c(v)\backslash \{\Lambda_v.u_c\}$.
Firstly, traverse $\Lambda_u.k$ agents along $(v,u)$. Then, set the strategy for $T_u$.  After the exploration of $T_u$, if $d(v,\Lambda_u.u_l) \leq d(r,v) + q$, return an agent from $\Lambda_u.u_l$ to $v$. Repeat for every $u \in c(v)\backslash \{\Lambda_v.u_c\}$ in the random order. For the last child $u = \Lambda_v.u_c$ after the exploration of $T_u$ do not return any agents.

By Lemma~\ref{lem:label_ok} we know that $\Lambda_v.k$ is the minimum number of agents that has to be send to $v$ (or invoked, for $v = r$). The remaining thing to prove is that the order of exploring subtrees does not matter as long as the one with the furthest leaf is visited as the last one. Let 
\begin{align}
    U_1 & = \{u | u \in c(v),\ d(v,\Lambda_u.u_l) \leq d(r,v) + q\}, \\
    U_2 & = c(v)\backslash U_1, \\
    \cT_i & = \{T_{u} | u \in U_i\},\ i\in\{1,2\}, \\
    u_c &= \Lambda_v.u_c, \\
    T_c &= T_{u_c}.
\end{align}

 If $U_1 = \emptyset$, then $\Lambda_v.k = \sum_{u \in c(v)}\Lambda_u.k$ and the order of exploration does not matter. On the other hand, if $U_2 = \emptyset$, then $\Lambda_v.k = 1$ and the agent has to finish in the furthest leaf (Observation~\ref{obs:furthest_leaf}), i.e., $T_c$ has to be explored as the last one. Let then $U_1,U_2 \neq \emptyset$, we notice that 
$T_c \in \cT_2$. Our intuition may say that is better to first explore trees from $\cT_1$ and then from $\cT_2$. But as long as the last tree to visit is from $\cT_2$, the order of the rest subtrees does not influence $\Lambda_v.k$ or the cost. Let us consider the strategy, which firstly explores all trees from $\cT_1$. The total number of required agents is 
\begin{equation}
\Lambda_v.k = 1 + \left(\sum_{u \in U_2\backslash\{u_c\}}\Lambda_u.k -1\right) + \Lambda_{u_c}.k = \sum\limits_{u \in U_2}\Lambda_u.k.
\end{equation}

This amount does not change for any order of the exploration. Indeed, after the exploration of a tree from $\cT_1$ an agent is reused to explore the next tree, decreasing the same $\Lambda_v.k$ by one. See, for example, the strategy in which first are explored trees from $\cT_2 \backslash \{T_c\}$. 
 The total number of agents stays the same, i.e, 
 \begin{equation}
 \Lambda_v.k = \sum_{u \in U_2}\Lambda_u.k + 1 + (\Lambda_{u_c}.k - 1) = \sum_{u \in U_2}\Lambda_u.k.
 \end{equation}
 
 At the end we observe, that as trees are explored separately and $\Lambda_v.k$ is the same for any order of the exploration, the total cost is not influenced either.
\end{proof}

\noindent
\textbf{Lower and Upper Bounds} For any tree $T$ the value of the optimal cost $c$ is bounded by $q + w(T) \leq c \leq q + 2w(T) - H$. 
A trivial lower bound is achieved on the path graph, where one agent traverses the total distance of $w(T)$. The upper bound can be obtained by performing $DFS$ algorithm by one entity, which set it on $q + 2w(T)$. Let $DFS'$ be the modified version of $DFS$, such that the agent does not return to the homebase (i.e., terminates in one of the leaves). Then we get an improved upper bound of $q + 2w(T) - H$, where $H$ is the height of $T$, which is tight (e.g., for paths). 
It is worth to mention that although $DFS'$ performs well on some graphs, it can be twice worse than $\ProcMain$. Let $q \geq 0$ be any invoking cost and $K_{1,n}$ be a star rooted in the internal vertex with edges of the weight $l > q$. While $DFS'$ produces the cost of $c' = q + 2ln - l$, the optimal solution is $c = qn + ln$. The ratio $c'/c$ grows to $2$ with the growth of $l$ and $n$.

\section{Trees in the On-line Setting}\label{sec:TreesOn}

In this Section we take a closer look at the algorithms for trees in the on-line setting. Because the height of tree $T$ is not known, the upper bound of the cost, set by $DFS'$, is $q + 2w(T) - \epsilon$, where $\epsilon$ is some small positive constant. This leads to the upper bound of $2$ for the competitive ratio.
We are going to prove that it is impossible to construct an algorithm that achieves better competitive ratio than $2$. We state the problem in the on-line setting formally:

\begin{description}
\item[On-line Tree Problem Statement]$\ $\\[2mm]
Given the invoking cost $q$ and the homebase in the root of $T$, find a strategy of the minimum cost for any tree $T$.
\end{description}

Denote as $\cT \subset \cG$ an infinite class of rooted in $v_0$ trees, where every edge has weight equal to $1$.
For every integer $l \in \natplus$, $i \in \{1,\ldots, l\}$ and $l_i \in \{1,\ldots,l\}$, we add to the class $\cT$ a tree constructed in the following way:

\begin{itemize}
    \item construct $l+1$ paths $P(v_i,v_{i+1}),\ i\in \{0,\ldots,l\}$ of the length $l$;
    \item for every $i \in \{1,\ldots, l\}$ construct a path $P(u'_i,u_i)$ of the length $l_i - 1$ (if $l_i = 1$, then $u_i = u'_i$) and add edge $(v_i,u'_i)$.
\end{itemize}

In other words, every graph in $\cT$ has a set of decision vertices $\{v_1,\ldots,v_l\}$ and set of leaves $\{v_{l+1}, u_1,\ldots,u_{l}\}$. Every decision vertex has exactly two children, $v_i$ is an ancestor of $v_j$ and $d(v_i,v_{i+1}) = l$ for every $0 \leq i < j \leq l + 1$.  
See Figure \ref{fig:onlineA}.

\begin{figure}[htb]
  \centering
\includegraphics[scale=1]{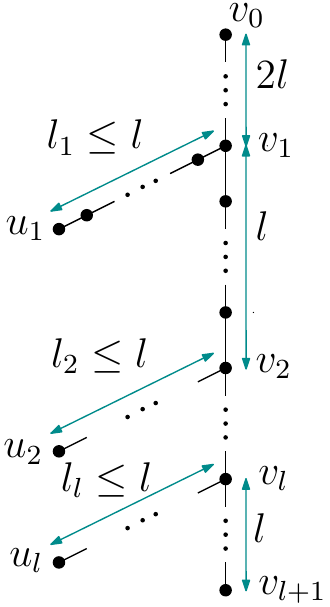}
\caption{
Illustration of graphs from the class $\cT$, where $l \in \natplus$ and $l_i \in \{1,\ldots,l\}$, $i \in \{1,\ldots, l\}$.}
\label{fig:onlineA}
\end{figure}

\begin{theorem}
Any on-line cost-optimal solution for trees is $2$-competitive. 
\end{theorem}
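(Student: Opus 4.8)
The plan is to establish the upper bound $r(\cS)\le 2$ by exhibiting one concrete on-line strategy that is already $2$-competitive; since a cost-optimal on-line solution can be no worse than any particular on-line strategy on any instance, its competitive ratio is then bounded above by $2$ as well. The strategy I would use is $DFS'$: a single agent, invoked once in the root, performs a depth-first traversal of $T$ and, rather than returning to the homebase, terminates in the last leaf it reaches. This is a legitimate on-line strategy, since at each occupied vertex the agent only needs the local information granted by the model—the status (explored or not) of adjacent vertices and the weights of incident edges—to either advance along an unexplored edge or, upon reaching a leaf, backtrack toward the nearest ancestor that still has an unexplored incident edge.

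First I would bound the off-line optimum from below. As already recorded in the lower/upper-bound discussion, any strategy must traverse every edge of $T$ at least once (each edge separates $T$ into two parts, both containing vertices that must be visited, and the homebase lies in only one of them) and must invoke at least one agent, whence
\begin{equation}
\cS^{opt}(T) \ge q + w(T).
\end{equation}
Next I would bound the cost of $DFS'$ from above. A depth-first search that ends back at the root crosses every edge exactly twice, for total distance $2w(T)$, which together with the single invocation costs $q+2w(T)$; since $DFS'$ instead halts at a leaf, it saves the final return walk to the root, of length at least one edge weight, so that
\begin{equation}
DFS'(T) \le q + 2w(T) - \epsilon < q + 2w(T)
\end{equation}
for some positive constant $\epsilon$ (the height being unknown on-line, one cannot guarantee saving the full height $H$, but any positive saving suffices).

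Combining the two bounds gives, for every tree $T$,
\begin{equation}
\frac{DFS'(T)}{\cS^{opt}(T)} \le \frac{q + 2w(T)}{q + w(T)} = 1 + \frac{w(T)}{q + w(T)} \le 2,
\end{equation}
the last step because $w(T) \le q + w(T)$. Taking the maximum over all trees yields $r(DFS') \le 2$, and since a cost-optimal on-line solution is at least as good as $DFS'$ on every instance, it is likewise $2$-competitive.

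The two displayed bounds are routine one-liners; the only point deserving a careful word is the realizability of $DFS'$ on-line, i.e. that depth-first backtracking needs no global map of $T$—this holds because an agent always learns from the status of its neighbours whether a local unexplored branch exists, and the shared communication lets it recall which already-visited ancestor still has an unexplored child. The main obstacle is therefore purely expository: making precise that the competitive ratio of the (not explicitly defined in general) ``cost-optimal on-line solution'' is dominated instance-by-instance by that of the explicit strategy $DFS'$, so that the $2$-competitiveness of $DFS'$ transfers to it.
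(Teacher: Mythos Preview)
Your argument establishes only the upper bound: there exists an on-line strategy (namely $DFS'$) with competitive ratio at most $2$, hence the best on-line strategy has ratio at most $2$. The paper dispatches this direction in a single sentence. What you have entirely omitted is the substantive content of the theorem as the paper understands it: the matching \emph{lower} bound, i.e.\ that \emph{every} on-line strategy has competitive ratio at least $2$. This is stated explicitly in the abstract (``For every strategy for trees in the on-line setting, we prove the competitive ratio to be no less than $2$'') and is what the paper spends the entire proof on after the first sentence.

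Concretely, the paper constructs an adversarial family of trees $\cT$ (a long spine of decision vertices $v_1,\ldots,v_l$, each carrying a side path of adaptively chosen length $l_i$) and shows that, against any on-line strategy $\cS$, the adversary can fix $T\in\cT$ so that $\cS(T)\ge 2l^2+4\sum l_i-3l$ while an off-line strategy pays at most $q+l^2+2\sum l_i+l$; letting $l\to\infty$ forces the ratio to $2$. None of this appears in your proposal. Your ``main obstacle'' paragraph worries about the definition of a cost-optimal on-line solution, but that is not where the difficulty lies; the difficulty is the adversary construction proving no strategy can do better than ratio $2$, and this is missing.
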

\begin{proof}
$DFS'$ is an example of an algorithm at most twice worse than the best solution, which sets the upper bound.
We are going now to show that for any invoking cost $q \geq 0$ and strategy $\cS$ there exists a tree $T \in \cT$ and a strategy $\cS'$, such that $\cS(T) \geq 2\cS'(T)$. Let $l \in \natplus$ be any integer. Values of $l_i,\ i \in \{1,\ldots,l\}$ are set during the execution of $\cS$. For every $v_i,\ i \in \{1,\ldots,l\}$ three cases can occur.

\paragraph*{Case A: \textmd{\textit{More than one agent reaches $v_i$ before any child of $v_i$ is explored.}}} The value of $l_i$ is set as $1$.

\paragraph*{Case B: \textmd{\textit{One of the agents explores one of the branches of $v_i$ at the depth $0 \leq h < l$ and the second branch at the depth $l$, before any other agent reaches $v_i$ for the first time.}}} In this situation we choose a set of graphs from $\cT$ for which the explored vertex at the depth $l$ is $v_{i+1}$ and $l_i = h + 1$. The value of $h$ might be $0$, as it takes place e.g., for $DFS$.

\paragraph*{Case C: \textmd{\textit{One of the agents explores two branches of $v_i$ at the depth $0 \leq h_1 < l,\ 1 \leq h_2 < l$, before any other agent reaches $v_i$ for the first time.}}} Without loss of generality, we assume that the branch explored to the level $h_2$ is visited as the last one. In this situation we choose a set of graphs from $\cT$ for which vertex $v_{i+1}$ belongs to the branch of $v_i$ explored till the level $h_2$ and $l_i = h_1 + 1$. Once again, $h_1 = 0$ means that the branch was not explored at all.

When $\cS$ explores $v_{l+1}$, all $l_i$ are defined and the set of graphs is narrowed to the exactly one graph, which we denote as $T$. 
We claim first that the distance $d_0$ traversed along the path $P(v_1,v_{l+1})$ is at least $2l^2 - l$ in any $\cS$. Let agent $a_1$ be the one, which explores $v_{l+1}$ and 
let $k \in \{0,\ldots, l\}$ be the number of decision vertices visited by more than one agent. 

\paragraph*{Case A': \textmd{\textit{$k = 0$, i.e., $T$ is explored by one agent.}}} In other words, for all $v_i$ holds the Case B. We notice that, whenever a strategy $\cS$ explores $v_{i+1},\ i \in \{1,\ldots,l\}$, exactly one vertex (i.e., leaf $u_i$) on the path $P(v_{i},u_{i})$ is unexplored. Thus, $P(v_1,v_{l+1})$ has to be traversed at least twice and $d_0 \geq 2l^2$.

\paragraph*{Case B': \textmd{\textit{$k = l$.}}} Path $P(v_1,v_{l})$ has to be obviously traversed at least twice and $P(v_l,v_{l+1})$ once, i.e., $d_0 \geq 2l(l-1) + l= 2l^2 - l$.

\noindent
\paragraph*{Case C': \textmd{\textit{$0 < k < l$.}}} In other words, $T_{v_{k+1}}$ is explored by one agent. Paths $P(v_1,v_k)$ and $P(v_{k+1}, v_{l+1})$ are traversed at least twice and $P(v_k,v_{k+1})$ at least once. Thus, $d_0 \geq 2l(k-1) + 2l(l-k) + l= 2l^2 - l$. 

Now, we have to analyze paths $P(v_i,u_i),\ i\in \{1,\ldots,l\}$. We divide decision vertices into the four groups based on the performance of $\cS$:
\begin{itemize}
\item $V_1 = \{v_i | l_i = 1,\ \text{no agent terminates in }u_i,\ i \in \{1,\ldots,l\}  \}$;
\item $V_2 = \{v_i | l_i = 1,\ \text{at least one agent terminates in }u_i,\ i \in \{1,\ldots,l\} \}$;
\item  $V_3 = \{v_i | l_i > 1,\ \text{no agent terminates in any vertices of the path}$ \\ $P(v_i,u_i),\ i \in \{1,\ldots,l\} \}$;
\item $V_4 = \{v_i | l_i > 1,\ \text{at least one agent terminates in a vertex from the path}$ \\ $ P(v_i,u_i),\ i \in  \{1,\ldots,l\} \}$.
\end{itemize}

Notice that $V_1, V_2, V_3$ and $V_4$ form a partition of decision vertices. Let us denote as $d_i$ the total distance traversed by all the agents along $P(v_i,u_i)$ in $\cS$. For any $v_i \in V_1$ we have $d_i \geq 2$ and $v_i \in V_2$ we have $d_i \geq 1$.
From the way how $T$ is constructed follows, that if $l_i>1$, then either holds Case B and $h > 0$ or Case C and $h_1 > 0$. In both situations path $P(v_i, p(u_i))$ is first traversed at least twice, leaving $u_i$ unexplored. If now, no agent terminates in any vertex of $P(v_i,u_i)$, then $P(v_i,p(u_i))$ has to be traversed at least twice more. Thus, 
\begin{equation}
    d_i \geq 4(l_i - 1) + 2 \geq 4l_i - 2, \qquad v_i \in V_3.
\end{equation}

On the other hand, if at least one agent terminates in any vertex of $P(v_i,u_i)$, then $P(v_i,p(u_i))$ can be traversed only one extra time. Which leads to, 
\begin{equation}
    d_i \geq 3(l_i - 1) + 1 \geq 3l_i - 2, \qquad v_i \in V_4.
\end{equation}

Lastly, we have to consider the extra cost $d'$ generated by searchers. Every invoked agent, which terminates on some path $P(v_i,u_i)$ has to traverse the edge $(v_0,v_1)$, thus 
\begin{equation}
d' \geq (q + l)\left(|V_2| + |V_4|\right) \geq |V_2| + l|V_4|.
\end{equation}

The total cost of exploring $T$ by $\cS$ can be lower bounded by
\begin{align}
&\cS(T) \geq 2l^2 - l + 2|V_1| + |V_2| + \sum\limits_{v_i \in V_3}(4l_i-2) + \sum\limits_{v_i \in V_4}(3l_i-2) + |V_2| + \\
& + l|V_4| \geq 2l^2 - l + 2|V_1| + 2|V_2| + 4\sum\limits_{v_i \in V_3}l_i - 2|V_3| + 4\sum\limits_{v_i \in V_4}l_i -  2|V_4|  = \\
& = 2l^2 - l + 4\sum\limits_{i=1}^l l_i - 2(|V_1| + |V_2| + |V_3| + |V_4|) =  2l^2 + 4\sum\limits_{i=1}^l l_i -3l.
\end{align}

Consider now the following off-line strategy $\cS'$, which explores the same graph $T$ by using one agent, which after reaching the decision vertex $v_i,\ i \in \{1,\ldots,l\}$, firstly traverses the path $P(v_i,u_i)$, then returns to $v_i$ and explores further the tree. The agent finally terminates in $v_{l+1}$. Thus, the path $P(v_0,v_{l+1})$ of the length $(l+1)l$ is traversed only once and paths $P(v_i, u_i),\ i \in \{1,\ldots,l\}$ twice. The optimal strategy can be then upper bounded by
\begin{equation}
\cS^{opt}(T) \leq \cS'(T) = q + l^2 + 2\sum\limits_{i=1}^{l}l_i + l.
\end{equation}

This leads to the following competitive ratio 
\begin{align}
r(\cS) &= \lim\limits_{l \rightarrow \infty} \frac{\cS(T)}{\cS^{opt}(T)} \geq \lim\limits_{l \rightarrow \infty} \frac{2l^2 + 4\sum\limits_{i=1}^{l}l_i - 3l}{q + l^2 + 2\sum\limits_{i=1}^{l}l_i + l}=\\ &= 2 - \lim\limits_{l \rightarrow \infty} \frac{5l +2q}{q + l^2 + 2\sum\limits_{i=1}^{l}l_i + l} 
=2,
\end{align}

which finishes the proof.
\end{proof}

\section{Conclusion}\label{sec:conclusion}

In this work we propose a new cost of the team exploration, which is the sum of total traversed distances by agents and the invoking cost which has to be paid for every searcher. This model describes well the real life problems, where every traveled unit costs (e.g., used fuel or energy) and entities costs itself (e.g., equipping new machines or software license cost).  
The algorithms, which construct the cost-optimal strategies for any given edge-weighted ring and tree in $O(n)$ time are presented. As for the on-line setting the $2$-competitive algorithm for rings is given and the lower bounds of $3/2$ and $2$ for the competitive ratio for rings and trees, respectively, are proved. While there is very little done in this area, a lot of new questions have been pondered. Firstly, it would be interesting to consider other classes of graphs, also for the edge-exploration (where not only every vertex has to be visited, by also every edge). Intuitively, for some of them, the problem would be easy and for some might be NP-hard (e.g., cliques). Another direction is to look more into the problem in the on-line setting, which is currently rapidly expanding due to its various application in many areas. It would be highly interesting to close the gap between lower and upper bounds of the competitive ratio for rings. Another idea is to bound communication for agents, which will make this model truly distributed. One may notice, that a simple solution of choosing one leader agent to pass messages between the other entities might not be cost-optimal, as it significantly rises the traveling cost.
Lastly, different variation of this model might be proposed, e.g., the invoking cost might increase/decrease with the number of agents in use or time might be taken under consideration as the third minimization parameter.

\bibliographystyle{plain}
\bibliography{bib}

\begin{thebibliography}{10}

\bibitem{bellmore1968traveling}
Mandell Bellmore and George~L Nemhauser.
\newblock The traveling salesman problem: a survey.
\newblock {\em Operations Research}, 16(3):538--558, 1968.

\bibitem{berbeglia2007static}
Gerardo Berbeglia, Jean-Fran{\c{c}}ois Cordeau, Irina Gribkovskaia, and Gilbert
  Laporte.
\newblock Static pickup and delivery problems: a classification scheme and
  survey.
\newblock {\em Top}, 15(1):1--31, 2007.

\bibitem{brass2014improved}
Peter Brass, Ivo Vigan, and Ning Xu.
\newblock Improved analysis of a multirobot graph exploration strategy.
\newblock In {\em Control Automation Robotics \& Vision (ICARCV), 2014 13th
  International Conference on}, pages 1906--1910. IEEE, 2014.

\bibitem{czyzowicz2017energy}
Jerzy Czyzowicz, Krzysztof Diks, Jean Moussi, and Wojciech Rytter.
\newblock Energy-optimal broadcast in a tree with mobile agents.
\newblock In {\em International Symposium on Algorithms and Experiments for
  Sensor Systems, Wireless Networks and Distributed Robotics}, pages 98--113.
  Springer, 2017.

\bibitem{DasCollaborativeExploration}
Shantanu Das, Dariusz Dereniowski, and Christina Karousatou.
\newblock Collaborative exploration by energy-constrained mobile robots.
\newblock In {\em International Colloquium on Structural Information and
  Communication Complexity}, pages 357--369. Springer, 2015.

\bibitem{DereniowskiFastCollaborative}
D.~Dereniowski, Y.~Disser, A.~Kosowski, D.~Pajak, and P.~Uznanski.
\newblock Fast collaborative graph exploration.
\newblock {\em Inf. Comput.}, 243:37--49, 2015.

\bibitem{disser2016general}
Yann Disser, Frank Mousset, Andreas Noever, Nemanja {\v{S}}kori{\'c}, and
  Angelika Steger.
\newblock A general lower bound for collaborative tree exploration.
\newblock In {\em International Colloquium on Structural Information and
  Communication Complexity}, pages 125--139. Springer, 2017.

\bibitem{DyniaPowerAware}
Miroslaw Dynia, Miroslaw Korzeniowski, and Christian Schindelhauer.
\newblock Power-aware collective tree exploration.
\newblock In {\em International Conference on Architecture of Computing
  Systems}, pages 341--351. Springer, 2006.

\bibitem{DyniaWhyRobots}
Miroslaw Dynia, Jakub {\L}opusza{\'n}ski, and Christian Schindelhauer.
\newblock Why robots need maps.
\newblock In {\em International Colloquium on Structural Information and
  Communication Complexity}, pages 41--50. Springer, 2007.

\bibitem{FT08}
F.V. Fomin and D.M. Thilikos.
\newblock An annotated bibliography on guaranteed graph searching.
\newblock {\em Theor. Comput. Sci.}, 399(3):236--245, 2008.

\bibitem{FraigniaudCollectiveTree}
P.~Fraigniaud, L.~Gasieniec, D.~R. Kowalski, and A.~Pelc.
\newblock Collective tree exploration.
\newblock {\em Networks}, 48(3):166--177, 2006.

\bibitem{golden2008vehicle}
Bruce~L Golden, Subramanian Raghavan, and Edward~A Wasil.
\newblock {\em The vehicle routing problem: latest advances and new
  challenges}, volume~43.
\newblock Springer Science \& Business Media, 2008.

\bibitem{HigashikawaExploration}
Y.~Higashikawa, N.~Katoh, S.~Langerman, and S.~Tanigawa.
\newblock Online graph exploration algorithms for cycles and trees by multiple
  searchers.
\newblock {\em J. Comb. Optim.}, 28(2):480--495, 2014.

\bibitem{kumar2012survey}
Suresh~Nanda Kumar and Ramasamy Panneerselvam.
\newblock A survey on the vehicle routing problem and its variants.
\newblock {\em Intelligent Information Management}, 4(03):66, 2012.

\bibitem{OrtolfMultiRobotExplo}
Christian Ortolf and Christian Schindelhauer.
\newblock Online multi-robot exploration of grid graphs with rectangular
  obstacles.
\newblock In {\em Proceedings of the twenty-fourth annual ACM symposium on
  Parallelism in algorithms and architectures}, pages 27--36. ACM, 2012.

\bibitem{vaishnav2017traveling}
Pooja Vaishnav, Naveen Choudhary, and Kalpana Jain.
\newblock Traveling salesman problem using genetic algorithm: a survey.
\newblock {\em International Journal of Scientific Research in Computer
  Science, Engineering and Information Technology}, 2(3):105--108, 2017.

\end{thebibliography}

\end{document}